\pgfplotsset{compat=newest} 
\newtheorem{thm}{Theorem}
\newtheorem{lemma}{Lemma}
\newtheorem{prop}{Proposition}
\newtheorem{cor}{Corollary}
\newtheorem{df}{Definition}
\newcommand{\set}[1]{\left\{#1\right\}}
\newcommand{\card}[1]{\left|#1\right|}
\newcommand{\etal}{\textit{et al.}\xspace}
\newcommand{\atmost}{\mathsf{atmost}}
\newcommand{\atleast}{\mathsf{atleast}}
\newcommand{\BOWSP}{\textsc{BO-WSP}\xspace}
\newcommand{\BOWSPLC}{\textsc{BO-WSP-LC}\xspace}
\newcommand{\BOWSPPO}{\textsc{BO-WSP-PO}\xspace}
\newcommand{\BOWSPPF}{\textsc{BO-WSP-PF}\xspace}
\newcommand{\PF}{\mathcal{P}}
\begin{document}

\begin{frontmatter}    

\title{The bi-objective workflow satisfiability problem and workflow resiliency\thanks{A preliminary version of this work appeared in~\cite{CrGuKa15}. Details about differences between the two versions are given at the end of the Introduction section.}
}
\runningtitle{\BOWSP and applications}

\maketitle

\author[A]{\fnms{Jason} \snm{Crampton}\thanks{Corresponding author: Information Security Group, Royal Holloway, University of London, Egham, TW20 9QY, Egham; +44 1784 443117; \url{jason.crampton@rhul.ac.uk}}},
\author[A]{\fnms{Gregory} \snm{Gutin}},
\author[B]{\fnms{Daniel} \snm{Karapetyan}}, and
\author[A]{\fnms{R\'emi} \snm{Watrigant}}
\runningauthor{Crampton, Gutin, Karapetyan, Watrigant}
\address[A]{Royal Holloway, University of London}
\address[B]{University of Essex and University of Nottingham}

\begin{abstract}
 A computerized workflow management system may enforce a security policy, specified in terms of authorized actions and constraints, thereby restricting which users can perform particular steps in a workflow.
 The existence of a security policy may mean that a workflow is unsatisfiable, in the sense that it is impossible to find a valid plan (an assignment of steps to authorized users such that all constraints are satisfied).
 Work in the literature focuses on the workflow satisfiability problem, a \emph{decision} problem that outputs a valid plan if the instance is satisfiable (and a negative result otherwise).
 
 In this paper, we introduce the \textsc{Bi-Objective Workflow Satisfiability Problem} (\BOWSP), which enables us to solve \emph{optimization} problems related to workflows and security policies.
 In particular, we are able to compute a ``least bad'' plan when some components of the security policy may be violated.
In general, \BOWSP is intractable from both the classical and parameterized complexity point of view (where the parameter is the number of steps). 
We prove that computing a Pareto front for \BOWSP is fixed-parameter tractable (FPT) if we restrict our attention to user-independent constraints.
This result has important practical consequences, since most constraints of practical interest in the literature are user-independent.

Our proof is constructive and defines an algorithm, the implementation of which we describe and evaluate. 
We also present a second algorithm to compute a Pareto front which solves multiples instances of a related problem using mixed integer programming (MIP).
We compare the performance of both our algorithms on synthetic instances, and show that the FPT algorithm outperforms the MIP-based one by
several orders of magnitude on most instances.

Finally, we study the important question of workflow resiliency and prove new results establishing that known decision problems are fixed-parameter tractable when restricted to user-independent constraints.
We then propose a new way of modeling the availability of users and demonstrate that many questions related to resiliency in the context of this new model may be reduced to instances of \BOWSP.
\end{abstract}

\begin{keyword}
access control \sep 
bi-objective workflow satisfiability problem \sep
fixed-parameter tractability \sep
resiliency
\end{keyword}

\end{frontmatter}

\section{Introduction}\label{sec:intro}

It is increasingly common for organizations to computerize their business and management processes.
The co-ordination of the tasks or steps that comprise a computerized business process is managed by a workflow management system (or business process management system).
A workflow is defined by the steps in a business process and the order in which those steps should be performed.
A workflow is executed multiple times, each execution being called a \emph{workflow instance}.
Typically, the execution of each step in a workflow instance will be triggered by a human user, or a software agent acting under the control of a human user.
As in all multi-user systems, some form of access control, typically specified in the form of policies and constraints, should be enforced on the execution of workflow steps, thereby restricting the execution of each step to some authorized subset of the user population.

Policies typically specify the workflow steps for which users are authorized (what Basin \etal\ call \emph{history-independent} authorizations~\cite{BaBuKa12}).
Constraints restrict which groups of users can perform sets of steps.
It may be that a user, while authorized by the policy to perform a particular step $s$, is prevented (by one or more constraints) from executing $s$ in a specific workflow instance because particular users have performed other steps in the workflow (hence the alternative name of \emph{history-dependent} authorizations~\cite{BaBuKa12}).
The concept of a Chinese wall, for example, limits the set of steps that any one user can perform~\cite{BrNa89}, as does separation-of-duty, which is a central part of the role-based access control model~\cite{ansi-rbac04}.
We note that policies are, in some sense, discretionary, as they are defined by the workflow administrator in the context of a given set of users.
However, constraints may be mandatory (and independent of the user population), in that they may encode statutory requirements governing privacy, separation-of-concerns, or high-level organizational requirements.

A simple, illustrative example for purchase order processing~\cite{CrGu13} is shown in Figure~\ref{fig:example-workflow}.
In the first step of the workflow, the purchase order is created and approved (and then dispatched to the supplier).
The supplier will submit an invoice for the goods ordered, which is processed by the create payment step.
When the supplier delivers the goods, a goods received note (GRN) must be signed and countersigned.
Only then may the payment be approved and sent to the supplier. 

In addition to defining the order in which steps must be performed, the workflow specification includes constraints to prevent fraudulent use of the purchase order processing system.
In our purchase order example, these constraints restrict the users that can perform pairs of steps in the workflow (illustrated in Figure~\ref{subfig:constraints}): the same user may not sign and countersign the GRN, for example.
In addition, we may specify an authorization policy, in this case using roles (Figures~\ref{subfig:role-step-assignment} and~\ref{subfig:user-role-assignment}).

\begin{figure}[h]\centering
\begin{subfigure}[b]{.325\textwidth}\centering
\includegraphics{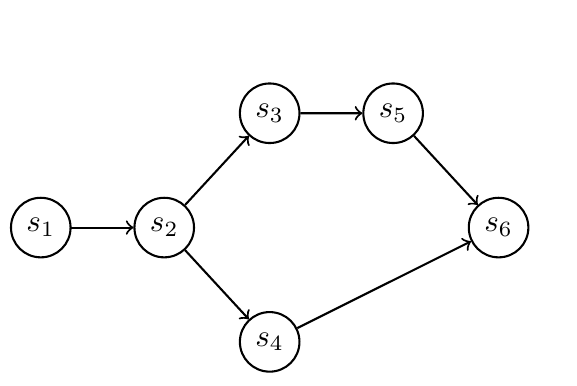}
\caption{Ordering on steps}\label{subfig:hasse}
\end{subfigure}
\hfill
\begin{subfigure}[b]{.15\textwidth}\centering
\includegraphics{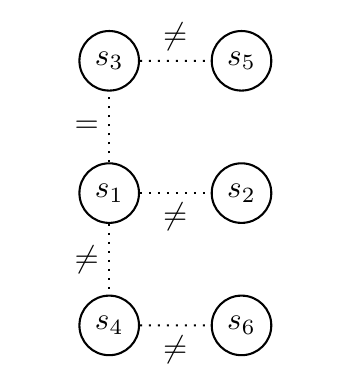}

\caption{Constraints}\label{subfig:constraints}
\end{subfigure}
\hfill
\begin{subfigure}[b]{.4\textwidth}\centering\small\setlength{\extrarowheight}{1pt}
  \begin{tabular}{|@{}l@{}@{}l@{}|}
    \hline
    $s_1$ & create purchase order \\
    $s_2$ & approve purchase order \\
    $s_3$ & sign GRN \\
    $s_4$ & create payment \\
    $s_5$ & countersign GRN \\
    $s_6$ & approve payment \\
    \hline
    $\ne$ & different users must perform steps \\
    $=$ & same user must perform steps \\
    \hline
  \end{tabular}
 \caption{Legend}
\end{subfigure}

\begin{subfigure}[b]{.35\textwidth}\centering\small
 \[
  \begin{array}{|lr|rrrrrr|}
  \hline
   & & s_1 & s_2 & s_3 & s_4 & s_5 & s_6 \\
  \hline
   {\sf financeClerk} & r_1 & 1 & & 1 & 1 & & \\
   {\sf marketingClerk} & r_2 & 1 & & 1 & & & \\
   {\sf marketingAdmin} & r_3 & & 1 & 1 & & 1 & \\
   {\sf financeAdmin} & r_4 & & & 1 & 1 & 1 & \\
   {\sf accountant} & r_5 & & & & & 1 & 1 \\
  \hline
 \end{array}  
 \]
 \caption{Role-step assignment}\label{subfig:role-step-assignment}
\end{subfigure}
\hfill
\begin{subfigure}[b]{.225\textwidth}\centering\small
 \[
  \begin{array}{|r|rrrrr|}
   \hline
    & r_1 & r_2 & r_3 & r_4 & r_5 \\
   \hline
    u_1, u_2 & 1 & & & & \\
    u_3, u_4, u_5 & & 1 & & &  \\
    u_6 & & & 1 & & \\
    u_7 & & & & 1 & \\
    u_8 & & & & & 1\\
   \hline
  \end{array}
 \]
 \caption{User-role assignment}\label{subfig:user-role-assignment}
\end{subfigure}
\hfill
\begin{subfigure}[b]{.25\textwidth}\small
 \[
  \begin{array}{|r|rrrrrr|}
  \hline
   & s_1 & s_2 & s_3 & s_4 & s_5 & s_6 \\
  \hline
   u_1, u_2 & 1 & & 1 & 1 & & \\
   u_3, u_4, u_5 & 1 & & 1 & & & \\
   u_6 & & 1 & 1 & & 1 & \\
   u_7 & & & 1 & 1 & 1 & \\
   u_8 & & & & & 1 & 1 \\
  \hline
  \end{array}
 \]
 \caption{Authorization relation}\label{subfig:authorization-relation}
\end{subfigure}
\caption{A simple constrained workflow for purchase order processing}\label{fig:example-workflow}
\end{figure}

It is well known that a workflow specification may be ``unsatisfiable'' in the sense that the combination of policies and constraints means that there is no way of allocating authorized users to workflow steps without violating at least one constraint. Informally, an instance of the Workflow Satisfiability Problem (WSP) asks whether a particular workflow specification is satisfiable or not. 
This problem is NP-hard~\cite{WaLi10}, although relatively efficient algorithms have been developed~\cite{CoCrGaGuJo14,CrGuYe13,KaGaGu,WaLi10} (assuming that the number of workflow steps is much smaller than the number of users that may perform steps in the workflow and that all constraints are user-independent).%
\footnote{The satisfaction of a user-independent constraint~--~defined formally in Section~\ref{sec:related-work}~--~does not depend on the identity of the users.}

Of course, the objectives of the business process associated with a workflow specification can never be achieved if the WSP instance is unsatisfiable.
Hence, it is interesting to consider an extended version of WSP that seeks a ``best'' allocation of users to steps in the event that the instance is unsatisfiable.
We might, for example, decide that the ``cost'' of breaking a binding-of-duty constraint is lower than that associated with breaking a separation-of-duty constraint, and wish to find a plan, based on this assumption, in the event that the workflow is not satisfiable.
Alternatively, if we have to assign a step to an unauthorized user, we might well wish to associate different costs with those unauthorized users: in our purchase order example, the cost of assigning the create-payment step to a user in the \textsf{accountant} role will presumably be less than that of assigning the step to a user in the \textsf{marketingClerk} role.

Accordingly, Crampton, Gutin and Karapetyan introduced the Valued WSP~\cite{CrGuKa15} in which each plan is associated with two costs, which are determined, respectively, by the authorization policy violations and constraint violations induced by the plan. 
To make the Valued WSP a single-objective optimization problem, the two costs were aggregated into one total cost by summing the two costs. 
For a satisfiable instance of WSP there will exist at least one plan with zero cost.
However, if an instance is unsatisfiable, Valued WSP will return a plan of minimum total cost.

While this is a step in the right direction, the output of Valued WSP is rather crude, in that it simply generates a plan that minimizes the total cost.
In many situations, it may be appropriate to consider the cost of authorization policy violations and constraint violations separately.
As we noted above, authorization policy is typically a matter of local, "discretionary" configuration, whereas constraints may be mandated by statutory requirements.
Thus, we may want to find a plan that, for example, ensures the cost of constraint violations is zero and and the cost due to policy violations is minimized.

In this paper, we study the \emph{Bi-Objective} Workflow Satisfiability Problem (\BOWSP) with (weighted) user-independent constraints.
Informally, we associate each constraint and authorization violation with a cost, which may, for example, be regarded as an estimate of the risk associated with allowing those violations.
We then consider \BOWSP as a bi-objective optimization problem and identify three approaches to \BOWSP optimization, including the computation of a so-called Pareto front that generalizes the other two.
A Pareto front is a maximal set of incomparable \BOWSP solutions, allowing practitioners to choose the most suitable of those solutions. 
In a sense, our work is related to recent work on risk-aware access control~\cite{ChCr11,ChRoKeKaWaRe07,DiBeEyBaMo04,NiBeLo10}, which seeks to compute the risk of allowing a user to perform an action, rather than simply computing an allow/deny decision, and ensure that cumulative risk remains within certain limits.
However, unlike related work, we focus on computing user-step assignments of minimal costs, rather than access control decisions.

Workflow resiliency is a property related to workflow satisfiability~\cite{WaLi10}: informally, a workflow specification is resilient if it remains satisfiable even if some authorized users are not available to perform steps.
It is known that decision problems related to resiliency are PSPACE-complete~\cite[Theorems 14--15]{WaLi10}.
Recent work has investigated quantitative resiliency, which seeks to measure the extent to which a workflow is resilient, rather than simply deciding whether it is resilient or not~\cite{MaMoMo14}.
In this paper, we study the decision problems of Wang and Li~\cite{WaLi10} from the perspective of fixed-parameter tractability.
We also demonstrate that some questions regarding quantitative resiliency can be reduced to \BOWSP.

We now summarize the main contributions of our paper:
\begin{itemize}
  \item to define \BOWSP as a bi-objective optimization version of WSP;
  \item to prove that computing a Pareto front admits a fixed-parameter tractable (FPT) algorithm\footnote{fixed-parameter tractability is discussed in the next section.} for weighted user-independent constraints;
  \item to prove that the running time is the best we can hope for in theory (see Theorem \ref{thm:tightness} and remarks afterwards);
  \item to show how to implement this result efficiently by presenting an algorithm called \emph{Pattern Branch and Bound (PBB)};
  \item to provide a second algorithm for computing a Pareto front for \BOWSP, based on mixed-integer programming (MIP);
  \item to prove that \BOWSP can be used to solve (a generalized version of) the Cardinality-Constrained Minimum User Problem~\cite{Roy2015};
  \item to provide a parameter-driven instance generator for \BOWSP, which may be of independent interest;
  \item to use this generator to compare the performance of our two methods for solving \BOWSP, and conclude that PBB demonstrates significantly better performance on most of the test instances;
  \item to prove that decision problems for workflow resiliency that are known to be PSPACE-complete~\cite{WaLi10} are FPT;
  \item to introduce a new model for reasoning about workflow resiliency, and 
  \item to show how \BOWSP can be used to solve problems related to workflow resiliency. 
\end{itemize}

In the next section, we review relevant concepts from the literature concerning the workflow satisfiability problem, user-independent constraints and resiliency in workflows.
In Section~\ref{sec:wwsp}, we define the bi-objective problem \BOWSP, and describe three ways in which the problem could be addressed. 
Then we give a fixed-parameter algorithm that can be used to solve \BOWSP in the case of user-independent constraints. We also describe a generalization of a problem introduced by Roy {\em et al.}~\cite{Roy2015} and how our fixed-parameter algorithm can be applied to the generalized problem.
In Section~\ref{sec:experiments}, we describe the implementation of two algorithms for computing a Pareto front of \BOWSP, one relying on the FPT approach described in the previous section, the other using mixed integer programming. 
We then describe an experimental framework for generating random instances of \BOWSP, and compare the performance of the two implementations.
In Section~\ref{sec:resiliency}, we study the resiliency problem in workflow systems. We review the model of resiliency of Wang and Li~\cite{WaLi10} and give a fixed-parameter algorithm deciding whether a workflow is resilient up to $t$ unavailable users, when only UI constraints are considered. We then define a new model allowing an organization to encode the availabilities of users in terms of probabilities, and then reduce this problem to \BOWSP in order to find a plan with the highest  likelihood of success, using the algorithm presented in Section~\ref{sec:wwsp}.
We conclude the paper with a discussion of related work, a summary of contributions and some suggestions for future work.\\

A preliminary version of this work was presented at the $20^{th}$ ACM Symposium on Access Control Models and Technologies (SACMAT 2015) \cite{CrGuKa15}. The main additional contributions of the present paper are \begin{inparaenum}[(i)]%
 \item to formally state the \BOWSP as a bi-objective optimization problem (only a single-objective problem, a special case of \BOWSP, was introduced in \cite{CrGuKa15});
 \item to give a fixed-parameter algorithm and an MIP-based algorithm for computing a Pareto front for \BOWSP, thus generalizing the algorithms of \cite{CrGuKa15};
\item to compare experimentally the performances of both algorithms using a richer generator of synthetic test instances;
 \item to study resiliency in workflow systems, first as a decision problem, by giving the first positive results (using fixed-parameter algorithms) with respect to the model of \cite{WaLi10}, and secondly by introducing a new model relying on probabilities of user availabilities. We then illustrate one application of \BOWSP by showing that the proposed algorithm can be used to find a most resilient plan with respect to this new model. 
\end{inparaenum}

%

\section{Definitions and related work}\label{sec:related-work}

In this section we recall the definition of the workflow satisfiability problem and user-independent constraints.
We also review related work.

\subsection{The workflow satisfiability problem}

\begin{df}\label{def:workflow}
A \emph{workflow specification} is defined by a partially ordered set $(S,<)$, where $S$ is a set of steps.
Given a workflow specification $(S,<)$ and a set of users $U$, an \emph{authorization policy} for a workflow specification is a relation $A \subseteq S \times U$.
A \emph{workflow authorization schema} is a tuple $((S,<),U,A)$, where $(S,<)$ is a workflow specification, $U$ the set of users, and $A$ an authorization policy.
\end{df}

A workflow specification describes a set of steps and restricts the order in which they must be performed when the workflow is executed, each such execution being called a \emph{workflow instance}.
In particular, if $s < s'$ then $s$ must be performed before $s'$ in every instance of the workflow.
User $u$ is authorized\footnote{In practice, the set of authorized step-user pairs, $A$, will not be defined explicitly.
Instead, $A$ will be inferred from other access control data structures.
In our purchase order workflow, for example, it is straightforward to infer $A$ (Figure~\ref{subfig:authorization-relation}) from the role-step and user-role assignment relations (Figures~\ref{subfig:role-step-assignment} and~\ref{subfig:user-role-assignment}).
We prefer to use $A$ in order to simplify the exposition.} to perform step $s$ only if $(s,u) \in A$.
We assume that for every step $s \in S$ there exists some user $u \in U$ such that $(s,u) \in A$ (otherwise the workflow is trivially unsatisfiable).

\begin{df}
A \emph{workflow constraint} has the form $(T,\Theta)$, where $T \subseteq S$ and $\Theta$ is a family of functions with domain $T$ and range $U$.
$T$ is the \emph{scope} of the constraint $(T,\Theta)$.
A \emph{constrained workflow authorization schema} is a tuple $((S,<),U,A,C)$, where $((S,<),U,A)$ is a workflow authorization schema and $C$ is a set of workflow constraints.
\end{df}
In the example of Figure \ref{fig:example-workflow}, the constraint requiring that $s_1$ and $s_3$ are assigned the same user can be written as $(T', \Theta')$, where $T'=\{s_1,s_3\}$ and  $\Theta'= 
\{T' \rightarrow \{u_i\}:\ i=1,2,\dots ,8\}.$

Given $\pi :S' \rightarrow U$, where $S' \subseteq S$, and $T \subseteq S'$, we define $\pi|_T : T \rightarrow U$, where $\pi|_T(s) = \pi(s)$ for all $s \in T$.

\begin{df}
Let $((S,<),U,A,C)$ be a constrained workflow authorization schema.
A \emph{(partial) plan} is a function $\pi: S' \rightarrow U$, where $S'\subseteq S$. A plan $\pi$ is \emph{complete} if $S'=S.$
Given $S' \subseteq S$, we say a plan $\pi : S' \rightarrow U$: 
\begin{itemize}
 \item is \emph{authorized} if for all $t \in S'$, $(t,\pi(t)) \in A$;
 \item \emph{satisfies} a workflow constraint $(T,\Theta)$ if $T \not\subseteq S'$ or $\pi|_T \in \Theta$; and
 \item is \emph{valid} if it is authorized and satisfies every constraint in $C$.
\end{itemize}
\end{df}

Informally, a workflow constraint $(T,\Theta)$ limits the users that are allowed to perform a set of steps $T$ in any given instance of the workflow.
In particular, $\Theta$ identifies authorized (partial) assignments of users to workflow steps in $T$.
The objective of the workflow satisfiability problem is to find a valid complete plan (if one exists).
More formally, we have the following definition~\cite{WaLi10}:

\begin{center}
\fbox{%
      \begin{tabulary}{.95\columnwidth}{@{}r<{~}@{}L@{}}
        \multicolumn{2}{@{}l}{\sc Workflow Satisfiability Problem (WSP)}\\
        \emph{Input:} & A constrained workflow authorization schema $((S,<),U,A,C)$\\
        \emph{Output:} & A valid plan $\pi : S \rightarrow U$ (WSP is satisfiable) or an answer that WSP is not satisfiable
       \end{tabulary}%
      }
\end{center}

Consider a pair $\tau,\tau'$ of  complete plans in our purchase order workflow defined as follows: $\tau(s_1)=\tau(s_3)=u_1$, $\tau(s_2)=\tau(s_5)=u_6,$ $\tau(s_4)=u_2$, $\tau(s_6)=u_8$, $\tau'(s_1)=\tau'(s_3)=u_2$, $\tau'(s_2)=\tau'(s_5)=u_6,$ $\tau'(s_4)=u_1$, and $\tau'(s_6)=u_7$. Since $\tau$ is a valid plan, the purchase order workflow is satisfiable. Plan $\tau'$ is not valid as $u_7$ is not authorized for $s_6$. If we add the constraint that $s_5$ and $s_6$ must be performed by different users and
$u_6$ and $u_7$ become unavailable, then the workflow becomes unsatisfiable as only $u_8$ is authorized to perform either of $s_5$ and $s_6$.

WSP may be used in a number of ways. 
First, it may be used as a form of static analysis, prior to deployment, to ensure that the workflow specification is useful, in the sense that there exists at least one ``execution path'' through the workflow. 
Second, it may be used to generate plans for workflow instances, where a plan assigns steps to users for each instance of the workflow specification when it is instantiated. 
Finally, it may be used in a more dynamic way, when steps in a workflow instance are not assigned to users in advance. Instead, a user would select a ``ready'' step in a workflow instance to execute. A ready step is any step $t$ such that all steps $s$ such that $s < t$ have been performed. 
We may model a partially complete workflow instance as a workflow specification in which the set of authorized users for any step that has been performed is a singleton, containing the user that performed the step.
Then a request by user $u$ to perform $t$ is evaluated by the workflow management system: the request is permitted if:
\begin{inparaenum}[(i)]%
 \item $u$ is authorized to perform $t$; and
 \item the updated workflow specification in which the only authorized user for step $t$ is $u$ is satisfiable.
\end{inparaenum}

It is worth noting that constraints are defined in terms of sets and functions, and a plan is also defined to be a function. That is, constraint satisfaction is (assumed to be) independent of the ordering on steps. 
Moreover, the authorization policy is fixed and independent of the ordering on steps. In other words, the ordering on steps, for the ``standard'' applications of WSP described above, is only relevant to the sequence in which the steps of a workflow instance are executed, not in determining whether there exists a valid plan (or the construction of a valid plan). 
However, in Section~\ref{sec:resiliency}, we will see that we may model workflow resiliency problems using authorization policies that vary over time. In this context, the order in which steps are performed is relevant to satisfiability.

\subsection{Constraint specification and user-independent constraints}
\label{sec:ui-constraints}

In practice, we do not define constraints by enumerating all possible elements of $\Theta$.
Instead, we define families of constraints that have ``compact'' descriptions.
Thus, for example, we might define the family of simple separation-of-duty constraints, each of which could be represented by a set $\set{t_1,t_2} \subseteq S$, the constraint being satisfied by a plan $\pi$ if and only if $\pi(t_1) \neq \pi(t_2)$.
More generally, for any binary relation $\propto$ defined on $U$, constraint $(s_1,s_2,\propto)$ is satisfied by any plan $\pi$ such that $\pi(s_1) \propto \pi(s_2)$.
(In particular, we may represent a binding-of-duty constraint in the form $(s_1,s_2,=)$ and a separation-of-duty constraint in the form $(s_1,s_2,\ne)$.)

A constraint $(T,\Theta)$ is said to be \emph{user-independent} (UI, for short) if, for every $\theta \in \Theta$ and every permutation $\phi : U \rightarrow U$, we have $\phi \circ \theta \in \Theta$, where $(\phi \circ \theta)(s) = \phi(\theta(s))$.
Informally, a constraint is UI if it does not depend on the identity of the users.
It appears most constraints that are useful in practice are UI~\cite{CoCrGaGuJo14}.
In particular, cardinality constraints, separation-of-duty and binding-of-duty constraints are UI.
In our experiments (Section~\ref{sec:experiments}), we will consider two particular types of UI constraints:
\begin{itemize}
 \item an \emph{at-least-$r$} \emph{counting} constraint has the form $\atleast(T,r)$, where $r \leqslant \card{T}$, and is satisfied provided at least $r$ users are assigned to the steps in $T$;
 \item an \emph{at-most-$r$} \emph{counting} constraint has the form $\atmost(T,r)$, where $r \leqslant \card{T}$, and is satisfied provided at most $r$ users are assigned to the steps in $T$.
\end{itemize}
Note that a separation-of-duty constraint $(s_1,s_2,\ne)$ is the counting constraint $\atleast(\set{s_1,s_2},2)$, and a binding-of-duty constraint $(s_1,s_2,=)$ is the constraint $\atmost(\set{s_1,s_2},1)$.

It is important to stress that our approach works for any UI constraints.
We chose to use counting constraints because such constraints have been widely considered in the literature (often known as cardinality constraints~\cite{CoCrGaGuJo14}).
Moreover, counting constraints can be efficiently encoded using mixed integer programming, so we can use off-the-shelf solvers to solve WSP and thus compare their performance with that of our bespoke algorithms.

\subsection{Related work}

A na\"ive approach to solving WSP would consider every possible assignment of users to steps in the workflow.
There are $n^k$ such assignments if there are $n$ users and $k$ steps, so an algorithm of this form would have complexity $\Omega(mn^k)$, where $m$ is the number of constraints.\footnote{We say that $f = \Omega(g)$ for two functions $f, g : \mathbb{N} \rightarrow \mathbb{N}$ if there exist $N_0, d \in \mathbb{N}^*$ such that $f(N) \ge d g(N)$ for all $N \ge N_0$.}
Moreover, Wang and Li showed that WSP is NP-hard, by reducing {\sc Graph $k$-Colorability} to WSP with separation-of-duty constraints~\cite[Lemma 3]{WaLi10}.

\subsubsection{Fixed-parameter tractability results for WSP}
\label{sec:fpt-results}

The importance of finding an efficient algorithm for solving WSP led Wang and Li to look at the problem from the perspective of \emph{parameterized complexity} \cite{DowFel13,Ni06}.
Parameterized complexity is a two-dimensional extension of classical complexity: in parameterized complexity an instance of a decision problem is a pair consisting of an instance $x\in \Sigma^*$ of a classical decision problem and a parameter $\kappa\in \mathbb{N}$, where $\Sigma$ is an alphabet. An algorithm is said to be \emph{fixed-parameter tractable} (FPT) if it solves a problem in time $O(f(\kappa)p(|x|))$, where $|x|$ is the size of $x$, and $f$ and $p$ are respectively a computable function and a polynomial. Normally, for applications $\kappa$ is supposed to be relatively small and thus if $f$ is a moderately growing function, an FPT algorithm may provide an efficient solution for the problem under consideration. If a problem can be solved using an FPT algorithm then we say that it is an \emph{FPT problem} and that it belongs to the class FPT.

Let us introduce a class W[1] of parameterized problems that includes all FPT problems, but whose hardest problems (W[1]-complete problems) are widely believed not to be in FPT \cite{DowFel13,Ni06}. Consider two parameterized problems $P$ and $P'$ (each problem is a set of two-dimensional instances, \emph{i. e.} a subset of $\Sigma^* \times \mathbb{N}$, where $\Sigma$ is a fixed alphabet). We say that $P$ reduces to $P'$ by a {\em parameterized reduction} if for every $(x, \kappa) \in \Sigma^* \times \mathbb{N}$, one can compute $(x', \kappa') \in \Sigma^* \times \mathbb{N}$ in time $O(f(k)(|x|+\kappa)^c)$ for a computable function $f$ and a constant $c$ such that (i) $\kappa' \le g(\kappa)$ for a computable function $g$, and (ii) $(x, \kappa) \in P$ if and only if $(x', \kappa') \in P'$.
Class W[1] consists of all parameterized problems that can be reduced to {\sc Independent Set} parameterized by the size of the solution: given a graph $G$ and a parameter $\kappa$, decide whether $G$ has an independent set of size $\kappa$.


Wang and Li observed that parameterized complexity is an appropriate way to study the WSP, because the number $k$ of steps is usually small and often much smaller than the number $n$ of users;\footnote{The SMV loan origination workflow studied by Schaad \etal, for example, has 13 steps and identifies five roles~\cite{ScLoSo06}. It is generally assumed that the number of users is significantly greater than the number of roles.} thus, $k$ can be considered as the parameter. 
Wang and Li \cite{WaLi10}  proved that, in general, WSP is W[1]-hard and thus is highly unlikely to admit an FPT algorithm.
However, they also proved WSP is FPT if we consider only separation-of-duty and binding-of-duty constraints \cite{WaLi10}. Henceforth, we consider special families of constraints, but allow arbitrary authorizations.
Crampton \etal~\cite{CrGuYe13} obtained significantly faster FPT algorithms that were applicable to ``regular'' constraints, thereby including the cases shown to be FPT by Wang and Li.
Subsequent research has demonstrated the existence of FPT algorithms for {\sc WSP} in the presence of other constraint types~\cite{CrCrGuJoRa13,CrGu13}. Cohen \etal~\cite{CoCrGaGuJo14}
introduced the class of UI constraints and showed that WSP remains FPT if only UI constraints are included. Note that every regular constraint is UI and all the constraints defined in the ANSI RBAC standard \cite{ansi-rbac04} are UI. Results of Cohen \etal~\cite{CoCrGaGuJo14}  have led to algorithms which significantly outperform the widely used SAT-solver SAT4J on difficult instances of WSP with UI constraints \cite{CoCrGaGuJo14c,KaGaGu}.

\subsubsection{Finding plans for unsatisfiable instances}
 

There has been considerable interest in recent years in making the specification and enforcement of access control policies more flexible.
Naturally, it is essential to continue to enforce effective access control policies.
Equally, it is recognized that there may well be situations where a simple ``allow'' or ``deny'' decision for an access request may not be appropriate.
It may be, for example, that the risks of refusing an unauthorized request are greater than the risks of allowing it.
One obvious example occurs in healthcare systems, where the denial of an access request in an emergency situation could lead to loss of life.
Hence, there has been increasing interest in context-aware policies, such as ``break-the-glass''~\cite{BrPe09}, which allow different responses to requests in different situations.
Risk-aware access control is a related line of research that seeks to quantify the risk of allowing a request~\cite{DiBeEyBaMo04}, where a decision of ``0'' might represent an unequivocal ``deny'' and ``1'' an unequivocal ``allow'', with decisions of intermediate values representing different levels of risk.

Similar considerations arise very naturally when we consider workflows.
In particular, we may specify authorization policies and constraints that mean a workflow specification is unsatisfiable.
Clearly, this is undesirable from a business perspective, since the business objective associated with the workflow can not be achieved.
There are two possible ways of dealing with an unsatisfiable workflow specification: modify the authorization policy and/or constraints; or find a ``least bad'' complete plan.
Prior work by Basin, Burri and Karjoth considered the former approach~\cite{BaBuKa12}.
They restricted their attention to modification of the authorization policy, what they called \emph{administrable authorizations}.
They assigned costs to modifying different aspects of a policy and then computed a strategy to modify the policy of minimal cost.
We adopt a different approach and consider minimizing the cost of ``breaking'' the policies and/or constraints.

\section{The bi-objective workflow satisfiability problem}\label{sec:wwsp}

Informally, given a workflow specification and a plan $\pi$, we assign two costs to $\pi$, determined by the authorizations and constraints that $\pi$ violates. 
The {\sc Bi-objective Workflow Satisfiability Problem} (\BOWSP) seeks to find a plan that minimizes these costs.
WSP is a decision problem; \BOWSP is a multi-objective optimization problem.
Since it is not possible, in general, to minimize both costs in \BOWSP simultaneously, we consider three different approaches to solving \BOWSP.
We show that all three approaches can be tackled by FPT algorithms for user-independent constraints (Theorem~\ref{thm1}).\footnote{In Section \ref{sec:fpt-results}, we defined FPT only for the decision problems, but the notion can be naturally extended to optimization problems. In case of \BOWSP, an FPT algorithm runs in time $O(f(k) {\rm poly}(n)),$ where $k$ and $n$ are the numbers of steps and users, respectively.}

More formally, let $(S, <)$ be a workflow, $U$ a set of users, and $C$ a
set of constraints. Let $\Pi$ denote the set of all possible plans from $S$ to $U$, and let $\mathbb{Q}^+$ be the set of non-negative rationals.
For each $c \in C$, we say that a function $\omega_c : \Pi \rightarrow
\mathbb{Q}^+$ is a \emph{constraint weight function} if, for all $\pi \in \Pi$:
 \[
  \omega_c(\pi)
   \begin{cases}
    = 0 & \text{if $\pi$ satisfies $c$}, \\
    > 0 & \text{otherwise}.
   \end{cases}
 \]
The pair $(c,\omega_c)$ is a \emph{weighted constraint}.
The intuition is that $\omega_c(\pi)$ represents the extent to which $\pi$ violates $c$.

Consider the constraints $c_1 = (\set{s_1,s_2},\ne)$ and $c_2 = (\set{s_1,s_3},=)$ from our purchase order workflow.
Then we might define, for some $x \in \mathbb{Q}^+$,
\[ 
 \omega_{c_1}(\pi) =%
  \begin{cases}
   0 & \text{if $\pi(s_1) \ne \pi(s_2)$}, \\
   10x & \text{otherwise};
  \end{cases}
  \quad\text{and}\quad
 \omega_{c_2}(\pi) =%
  \begin{cases}
   0 & \text{if $\pi(s_1) = \pi(s_3)$}, \\
   x & \text{otherwise}.
  \end{cases}
\]
In this example, we have assigned a greater weight to breaking the separation-of-duty constraint than breaking the binding-of-duty constraint.
These weights reflect the danger of fraud that exists when the same user performs $s_1$ and $s_2$, compared to the relatively unimportant constraint that the same user creates a purchase order and signs for the goods received.
In a different workflow specification in which different security considerations are relevant, these weights might be reversed.

Now consider the counting constraint $\atmost(T,r)$.
Then $\omega_c(\pi)$ depends only on the number of users assigned to the steps in $T$ (and the penalty should increase as the number of users increases).
Let $\pi(T)$ denote the set of users assigned to steps in $T$.
Then $\omega_c(\pi) = 0$ if $\card{\pi(T)} \le r$, and $\omega_c(\pi) > 0$ otherwise. Further, if $\pi$ and $\pi'$ are two plans such that $\card{\pi(T)} < \card{\pi'(T)}$, then $\omega_c(\pi) \le \omega_c(\pi')$.

Having defined $\omega_c$ for each constraint in the workflow specification, we define
\[
 \omega_C(\pi) = \sum_{c \in C} \omega_c(\pi),
\]
which we call the \emph{constraint weight} of $\pi$.
Note that $\omega_C(\pi) = 0$ if and only if $\pi$ satisfies all constraints in $C$.
Note also that $\omega_C(\pi)$ need not be defined to be the linear sum: $\omega_C(\pi)$ may be defined to be an arbitrary function computable in time polynomial in the input size of the tuple $(\omega_c(\pi):\ c\in C)$ and Theorem \ref{thm1} below would still hold. However, we will not use this generalization in this paper, but simply remark that it is possible, if needed.

We say that a function $\omega: 2^S \times U \rightarrow \mathbb{Q}^+$ is a \emph{weighted set-authorization} function if $\omega(\emptyset, u) = 0$ for all $u \in U$. 
Let $\pi^{-1}(u) = \{ s \in S: \pi(s)=u \}$ denote the set of steps to which $u$ is assigned by plan $\pi$.
Then, given a weighted set-authorization function $\omega$, we define an \emph{authorization weight} function $\omega_A : \Pi \rightarrow \mathbb{Q}^+$ to be, for all $\pi \in \Pi$:
\[
 \omega_A(\pi) = \sum_{u \in U}  \omega(\pi^{-1}(u),u).
\]
In the remainder of the paper, we assume that both $\omega_A$ and $\omega_C$ are computable in polynomial time.

Given a workflow $(S,<)$, a set of users $U$, a set of weighted constraints $C_{\omega}$, and weighted set-authorization function $\omega$, we say that $((S,<),U,\omega,C_{\omega})$ is a \emph{weighted constrained workflow schema}.

\begin{center}
\fbox{%
      \begin{tabular}{p{0.08\linewidth}p{0.65\linewidth}}
        \multicolumn{2}{@{}l}{\sc Bi-Objective WSP (\BOWSP)}\\
        \emph{Input:} & A weighted constrained workflow schema $((S,<),U,\omega,C_{\omega})$ \\ 
        \emph{Output:} & A plan $\pi : S \rightarrow U$\\ 
        \emph{Goal:} & Minimize $\omega_C(\pi)$ and $\omega_A(\pi)$\\
       \end{tabular}%
      }
\end{center}

The alert reader will have noticed that the authorization relation is no longer included in the workflow specification.
This is because any authorization relation $A$ can be encoded by an appropriate choice of $\omega$.
The next result formalizes this claim and establishes that \BOWSP generalizes WSP.

\begin{prop}\label{pro}
 Given an instance of WSP $((S,<),U,A,C)$, let $\omega$ be a weighted set-authorization function such that for all $T \subseteq S$ and all $u \in U$:
  \[
   \omega(T,u)
    \begin{cases}
     = 0 & \text{if $(u,t) \in A$ for all $t \in T$}, \\
     > 0 & \text{otherwise};
    \end{cases}
  \]
  And, for each $c \in C$, let $\omega_c$ be a constraint weight function.
 Then the \BOWSP instance $((S,<),U,\omega,C_{\omega})$, where $C_{\omega} = \set{(c,\omega_c) : c \in C}$, has a plan $\pi$ such that $\omega_A(\pi) = \omega_C(\pi) = 0$ if and only if $((S,<),U,A,C)$ is satisfiable.
\end{prop}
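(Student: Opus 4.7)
The proof is a direct two-way verification, leveraging the definitions of $\omega_A$, $\omega_C$, and the fact that both are sums of non-negative terms.

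For the forward direction, I start by assuming the WSP instance $((S,<),U,A,C)$ is satisfiable, which yields a valid complete plan $\pi : S \to U$ that is authorized and satisfies every constraint in $C$. Since $\pi$ satisfies each $c \in C$, the defining property of a constraint weight function gives $\omega_c(\pi) = 0$, and summing yields $\omega_C(\pi) = 0$. For the authorization cost, I use that $\pi$ is authorized: for every $u \in U$ and every $t \in \pi^{-1}(u)$ we have $(t,u) \in A$, hence by the definition of $\omega$ we obtain $\omega(\pi^{-1}(u),u) = 0$ for every $u$. (The case $\pi^{-1}(u) = \emptyset$ is handled by the requirement that $\omega(\emptyset,u)=0$, which is built into the definition of a weighted set-authorization function.) Summing over $u \in U$ gives $\omega_A(\pi) = 0$.

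For the converse, I suppose a plan $\pi : S \to U$ in the \BOWSP instance satisfies $\omega_A(\pi) = \omega_C(\pi) = 0$. Since $\omega_C(\pi)$ is a sum of non-negative rationals and equals zero, each summand $\omega_c(\pi)$ must be zero. The definition of a constraint weight function then forces $\pi$ to satisfy every $c \in C$. Similarly, $\omega_A(\pi) = \sum_u \omega(\pi^{-1}(u),u)$ is a sum of non-negative values equal to zero, so $\omega(\pi^{-1}(u),u) = 0$ for every $u$; by the specified form of $\omega$, this means $(t,u) \in A$ for every $t \in \pi^{-1}(u)$, i.e.\ $\pi$ is authorized. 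Since plans in \BOWSP are complete functions $S \to U$ by definition, $\pi$ is a valid complete plan for the original WSP instance, which is therefore satisfiable.

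There is no real obstacle here: the whole argument is unpacking the definitions and invoking the elementary fact that a finite sum of non-negative numbers is zero only when every term is zero. The only place a reader might pause is in noting that $\omega$ is well defined even when a user is unassigned (handled by the $\omega(\emptyset,u)=0$ clause) and that the freedom to pick any $\omega_c$ satisfying the two-sided condition is all that is needed; no specific choice of positive weights enters the argument.
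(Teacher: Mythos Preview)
Your argument is correct: both directions follow by unpacking the definitions and using that a finite sum of non-negative rationals vanishes iff each term vanishes. The paper itself states this proposition without proof, so there is nothing further to compare; your write-up is exactly the routine verification one would expect.
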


In many cases, a weighted set-authorization function $\omega: 2^S \times U \rightarrow \mathbb{Q}^+$ satisfies the ``linearity'' condition $\omega(T,u)=\sum_{t\in T}\omega(\{t\},u)$ for every $T\subseteq S$ and $u\in U$.
However, $\omega$ does not necessarily satisfy this condition. For example:
\begin{enumerate}
 \item We can introduce a large penalty $\omega(T,u)$, effectively modeling simple ``prohibitions'' or ``negative authorizations'' (for instance, that we prefer not to involve $u$ in steps in $T$).
We use weights like this in our experimental work, described in Section~\ref{sec:experiments}.

In our purchase order example, we might set different (non-zero) weights for $\omega(\set{s_5},u_3)$, $\omega(\set{s_5},u_4)$ and $\omega(\set{s_5},u_5)$, depending on the relative seniority and/or experience of these three users (each of whom is assigned to the \textsf{marketingClerk} role).
 \item We can define a limit $\ell_u$ on the number of steps that can be executed by $u$, by setting a large penalty $\omega(T,u)$ for all $T$ of cardinality greater than $\ell_u$.
 \item The weights associated with the same user executing different steps may not increase linearly.
 Once a user has performed one particular unauthorized step, the additional cost of executing a \emph{related} unauthorized step may be reduced, while the additional cost of executing an \emph{unrelated} unauthorized step may be the same as the original cost.
 
 In our purchase order example, we might set $\omega(\set{s_1},u) = M$, where $M$ is a large constant, for any user not authorized for step $s_1$.
 However, we would then set $\omega(\set{s_1,s_3},u) = M$, since the binding-of-duty constraint in $s_1$ and $s_3$ requires the same user performs the two steps.
 \item We can implement separation-of-duty on a per-user basis, which is not possible with UI constraints.  In particular, it may be acceptable for $u_1$ to perform steps $s_1$ and $s_2$, but not $u_2$, in which case $\omega(\{s_1,s_2\},u_1)$ would be small, while $\omega(\{s_1,s_2\},u_2)$ would be large.
 
 In our purchase order example, we would set $\omega(\set{s_4,s_6},u_8) < \omega(\set{s_4,s_6},u_7) < \omega(\set{s_4,s_6},u_1)$, because of the relative seniority of the respective roles to which these users are assigned.
  \item We can minimize the cost of involved users, see Section \ref{sec:mup}.
\end{enumerate}
And in certain cases, it may be that $\omega(T,u) > 0$ for all pairs $(T,u)$.
We see an example of a set-authorization function with this property in Section~\ref{sec:mup}.

\subsection{Optimizing bi-objective WSP}

By analogy to the satisfiability constraint, a weighted constraint $(c,\omega_c)$ is called {\em user-independent}  if, for every permutation $\theta$ of $U$, $\omega_c(\pi)=\omega_c(\theta\circ\pi)$. 
Thus, a weighted UI constraint does not distinguish between users.
Any (weighted) counting constraint, for which the weight of plan $\pi$ is defined in terms of the cardinality of the image of $\pi$, is UI.

Given a plan $\pi:\ S'\rightarrow U$, where $S'\subseteq S$, we define an equivalence relation $\sim_{\pi}$ on $S'$ as follows:  $s \sim_{\pi} s'$ if and only if $\pi(s) = \pi(s')$. This equivalence relation partitions set $S'$ into (non-empty) subsets, called {\em blocks}; the set of blocks is called the  {\em pattern} $P(\pi)$ of $\pi$. We say that two plans $\pi$ and $\pi'$ are {\em equivalent} if they have the same pattern. 
Consider the following plans for our purchase order workflow.
\begin{center}
 \begin{tabular}{|>{$}r<{$}|*{6}{>{$}r<{$}}|}
 \hline
      & s_1 & s_2 & s_3 & s_4 & s_5 & s_6 \\
 \hline
  \tau & u_1 & u_6 & u_1 & u_2 & u_6 & u_5 \\
  \tau' & u_2 & u_6 & u_2 & u_1 & u_6 & u_8 \\
 \hline
 \end{tabular}
\end{center}
Then $\tau$ and $\tau'$ are equivalent as both have pattern $\{\{s_1,s_3\},\{s_4\},\{s_2,s_5\},\{s_6\}\}.$
Observe that if all constraints are UI and plans $\pi$ and $\pi'$ are equivalent, then $\omega_C(\pi)=\omega_C(\pi')$.
Note, however, that $\omega_A(\pi)$ is not necessarily equal to $\omega_A(\pi')$, because $\omega_A(\pi)$ is determined by the set-authorization function, which does depend on the particular users assigned to steps by a plan.
A pattern is said to be {\em complete} if $S'=S$. Plans $\tau$ and $\tau'$ above are complete.

Ideally, we would like to minimize both weights of \BOWSP simultaneously, but, in general, this is impossible as a plan which minimizes one weight will not necessarily minimize the other. 
Thus, we will consider \BOWSP as a bi-objective optimization problem.  
There are three standard approaches to solving bi-objective optimization problems~\cite{Miet}:

\begin{enumerate}
\item We consider a single objective function defined as a linear combination of $\omega_C(\pi)$ and $\omega_A(\pi)$:  that is, we consider an objective function of the form $a\omega_C(\pi)+b\omega_A(\pi)$, where $a$ and $b$ are constants. Since we can incorporate $a$ and $b$ into $\omega_C$ and $\omega_A$, we obtain an optimization problem in which we wish to find a plan $\pi$ of minimum total weight $\omega_C(\pi)+\omega_A(\pi)$. This problem was introduced in \cite{CrGuKa15}, where it was called the {\sc Valued WSP}.  

Henceforth, we will call this problem \BOWSPLC (the LC standing for ``linear combination'').
\item A complete plan $\pi$ is called {\em Pareto optimal} if there is no complete plan $\pi'$ such that 
 \[ \omega_C(\pi)\ge \omega_C(\pi'),\quad \omega_A(\pi)\ge \omega_A(\pi')\quad \text{and}\quad \omega_C(\pi) + \omega_A(\pi)>\omega_C(\pi') + \omega_A(\pi'). \]
In this approach, we minimize one weight subject to the other being upper bounded.
More precisely, given an upper bound $B$ to the constraint (resp. authorization) weight, we aim to find a Pareto optimal plan among all plans $\pi$ such that $\omega_C(\pi) \le B$ (resp. $\omega_A(\pi) \le B$).

Henceforth, we call this problem \BOWSPPO (the PO standing for ``Pareto optimal'').
\item  A pair of complete plans $\pi$ and $\pi'$ are called {\em weight-equal} if 
  \[ \omega_C(\pi)= \omega_C(\pi')\quad \text{and}\quad \omega_A(\pi)=\omega_A(\pi'). \]
In this approach, we compute the {\em Pareto front}, which is a maximal set of non-weight-equal Pareto optimal plans.

Henceforth, we call this problem \BOWSPPF (the PF standing for ``Pareto front'').
\end{enumerate}
%

It is not hard to see that optimal plans obtained by solving \BOWSPLC and \BOWSPPO are Pareto optimal.
Thus, \BOWSPPF generalizes \BOWSPLC and \BOWSPPO. 
It is known that replacing a single objective with multiple objectives (and computing a Pareto front) usually increases the complexity class of the problem \cite{Miet}. 
However, for \BOWSPPF with UI constraints, we have the following theorem.

\begin{thm}\label{thm1} Given a weighted constrained workflow schema $((S,<),U,\omega,C_{\omega})$ in which all constraints are UI, we can solve \BOWSPPF in time\footnote{$O^*(.)$ suppresses any polynomial factors and terms.} $O^*(2^{k\log_2 k})$, where $k$ is the number of steps.
\end{thm}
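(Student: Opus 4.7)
The plan is to exploit the fact that, under user-independent (UI) constraints, the constraint weight $\omega_C(\pi)$ depends only on the pattern $P(\pi)$, not on the specific identity of the users assigned to the blocks. Thus the search space collapses from plans to patterns, and the key identity
\[ \omega_C(\pi) = \omega_C(\pi') \quad\text{whenever } P(\pi)=P(\pi') \]
reduces \BOWSPPF to: for each pattern $P$, find a complete plan $\pi$ with $P(\pi)=P$ that minimizes $\omega_A(\pi)$, and then extract Pareto-optimal pairs.

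First I would enumerate all complete patterns of $S$, i.e.\ all partitions of $S$ into non-empty blocks. The number of such partitions is the Bell number $B_k$, which satisfies $B_k = O(2^{k\log_2 k})$, giving the claimed running time factor. For each enumerated pattern $P$, I can compute $\omega_C(P)$ in polynomial time by picking any plan realizing $P$ and evaluating $\omega_C$ on it (this value is well defined by UI).

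Second, for each fixed pattern $P$ with blocks $B_1,\dots,B_b$ (where $b\le k$), I would compute in polynomial time the minimum authorization weight
\[ \omega_A^{\min}(P) \;=\; \min\Bigl\{\sum_{i=1}^b \omega(B_i,u_i) \;:\; u_1,\dots,u_b \in U \text{ distinct}\Bigr\}, \]
together with a witnessing assignment $u_1,\dots,u_b$. Since distinct blocks must be assigned distinct users (by definition of $\sim_\pi$), this is exactly a minimum-weight bipartite assignment problem between the at most $k$ blocks and the $n$ users, which can be solved by the Hungarian method in time polynomial in $n$ and $k$. The assumption that $\omega$ is polynomial-time computable makes the edge weights accessible.

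Finally, I would collect the resulting list of pairs $\bigl(\omega_C(P),\omega_A^{\min}(P)\bigr)$ together with the witnessing complete plan $\pi_P$, and extract from it the Pareto front by a standard sort-and-sweep in time polynomial in $B_k$. Any Pareto-optimal complete plan must coincide, up to weight equality, with some $\pi_P$, because swapping within a fixed pattern to a non-minimum authorization-weight representative is weakly dominated. The overall running time is $B_k \cdot \mathrm{poly}(n,k) = O^*(2^{k\log_2 k})$.

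The main obstacle is really conceptual rather than computational: one must argue cleanly that (i) $\omega_C$ is constant on each equivalence class of patterns (immediate from UI once one checks that permuting users within $U$ preserves the pattern), and (ii) that restricting attention to the per-pattern minimizers of $\omega_A$ loses no Pareto optimum, which follows because any plan with the same pattern but larger authorization weight is dominated by $\pi_P$. Both points are short; the bulk of the running-time analysis is just $B_k \le k^k = 2^{k\log_2 k}$.
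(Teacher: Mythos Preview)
Your proposal is correct and follows essentially the same approach as the paper: enumerate all $\mathcal{B}_k = O(2^{k\log_2 k})$ patterns, use the UI property to conclude that $\omega_C$ is constant on each pattern class, compute the per-pattern minimum authorization weight via a bipartite assignment (Hungarian method), and extract the Pareto front from the resulting list. The paper builds $\PF$ incrementally (adding each candidate unless dominated) rather than by a final sort-and-sweep, but this is a cosmetic difference with no effect on the overall $O^*(2^{k\log_2 k})$ bound.
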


\begin{proof}
We show how to calculate the Pareto front, which we store in the set $\PF$ and initialize to the empty set.
Recall that for equivalent complete plans $\pi$ and $\pi'$, we have $\omega_C(\pi)=\omega_C(\pi').$
However, $\omega_A(\pi) = \sum_{u\in U} \omega_A(\pi^{-1}(u),u)$ is, in general, different from $\omega_A(\pi')$. 

Let $\pi$ be a complete plan. 
Observe that if there is a complete plan $\pi'$, which is Pareto optimal and equivalent to $\pi$, then $\omega_A(\pi')$ is minimum among all equivalent complete plans equivalent to $\pi$.
Let $P(\pi) = \{T_1,\dots,T_p\}$.
Then to find such a plan $\pi'$ efficiently, we construct a weighted complete bipartite graph $G_{\pi}$ with partite sets $\set{1,\dots,p}$ and $U$ and define the weight of edge $\{q,u\}$ to be $\omega(T_q,u)$.
(Henceforth, for a positive integer $x$, we write $[x]$ to denote $\{1,\dots ,x\}$.) 

Now observe that $G_{\tau}=G_{\pi}$ for every pair $\pi,\tau$ of equivalent complete plans and that $\omega_A(\pi)$ equals the weight of the corresponding matching of $G_{\pi}$ covering all vertices of $[p]$. Hence, it suffices to find such a matching of $G_{\pi}$ of minimum weight, which can be done by the Hungarian method~\cite{Ku55} in time $O(n^3)$ for a graph with $n$ vertices.

Let ${\cal B}_k$ be the number of partitions of the set $[k]$ into non-empty subsets: that is, ${\cal B}_k$ denotes the $k$th Bell number. 
Observe that ${\cal B}_k$ is smaller than $k!$ and there are algorithms of running time $O({\cal B}_k)=O(2^{k\log_2 k})$  to generate all partitions of $[k]$ \cite{Er88}. 
Thus, we can generate all patterns in time $O(2^{k\log_2 k})$. 
For each pattern we compute the corresponding complete plan $\pi$ of minimum authorization weight, and add $\pi$ to $\PF$ unless $\PF$ already has a plan $\tau$ such that 
 \[ \omega_C(\pi)\ge \omega_C(\tau)\quad \text{and}\quad \omega_A(\pi)\ge \omega_A(\tau)\quad \text{and}\quad \omega_C(\pi) + \omega_A(\pi)>\omega_C(\tau) + \omega_A(\tau). \]
The total running time is  $O^*(2^{k\log_2 k})$.
\end{proof}

As we noted above, an algorithm that solves \BOWSPPF can be used to solve \BOWSPLC and \BOWSPPO. 
We thus have the following two corollaries of Theorem~\ref{thm1}:

\begin{cor}
We can compute, in time $O^*(2^{k\log_2 k})$, a plan $\pi$ for \BOWSP with UI constraints such that $\omega_C(\pi) + \omega_A(\pi)$ is minimum.
\end{cor}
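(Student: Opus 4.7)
The plan is to derive this corollary directly from Theorem~\ref{thm1} by observing that any minimizer of the sum $\omega_C(\pi)+\omega_A(\pi)$ must appear (up to weight-equivalence) in the Pareto front $\PF$ computed by the algorithm of Theorem~\ref{thm1}. More precisely, I would first argue that if $\pi^*$ minimizes $\omega_C(\pi)+\omega_A(\pi)$ over all complete plans, then $\pi^*$ is Pareto optimal: otherwise there would exist some $\pi'$ with $\omega_C(\pi')\le \omega_C(\pi^*)$, $\omega_A(\pi')\le \omega_A(\pi^*)$, and strict inequality in the sum, contradicting minimality of $\pi^*$.

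Second, I would note that by the definition of Pareto front used in the paper (a maximal set of non-weight-equal Pareto optimal plans), every Pareto optimal plan is weight-equal to some plan in $\PF$, so in particular some element $\pi \in \PF$ satisfies $\omega_C(\pi)+\omega_A(\pi)=\omega_C(\pi^*)+\omega_A(\pi^*)$.

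Therefore the algorithm is: run the algorithm of Theorem~\ref{thm1} to construct $\PF$, then return any $\pi\in\PF$ minimizing $\omega_C(\pi)+\omega_A(\pi)$. The construction of $\PF$ takes time $O^*(2^{k\log_2 k})$ by Theorem~\ref{thm1}, and since $|\PF|$ is bounded by the number of patterns on $[k]$, which is at most ${\cal B}_k = O(2^{k\log_2 k})$, the final linear scan adds only a multiplicative polynomial factor that is absorbed into the $O^*$ notation. The total running time is therefore $O^*(2^{k\log_2 k})$, as claimed.

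There is no real obstacle here: the work was already done in Theorem~\ref{thm1}. The only point worth being careful about is that the paper's definition of Pareto optimality uses a strict inequality on the sum together with weak inequalities on the individual objectives, which is exactly what is needed to guarantee that a minimizer of the sum is Pareto optimal and hence represented in $\PF$.
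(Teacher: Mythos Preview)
Your proposal is correct and follows essentially the same reasoning as the paper, which simply observes (just before stating the corollary) that any optimal plan for \BOWSPLC is Pareto optimal and hence represented in the Pareto front computed by Theorem~\ref{thm1}. You have merely spelled out the details that the paper leaves implicit.
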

\begin{cor}
For any $B \in \mathbb{N}$, we can compute, in time $O^*(2^{k\log_2 k})$, a plan $\pi$ for \BOWSP with UI constraints such that $\omega_A(\pi)$ (resp. $\omega_C(\pi)$) is minimum among all plans $\tau$ such that $\omega_C(\tau) \le B$ (resp. $\omega_A(\tau) \le B$).
\end{cor}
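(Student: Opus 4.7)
The plan is to derive this corollary directly from Theorem~\ref{thm1} by computing a Pareto front and then scanning it for the best feasible plan. Concretely, I would first invoke the algorithm of Theorem~\ref{thm1} to produce a Pareto front $\PF$ in time $O^*(2^{k\log_2 k})$. Then I would iterate over $\PF$, discard every plan $\tau$ with $\omega_C(\tau) > B$ (in the first version) or $\omega_A(\tau) > B$ (in the second version), and output one of the surviving plans minimizing the other objective. Since $|\PF| \le {\cal B}_k = O(2^{k\log_2 k})$ and both weights are computable in polynomial time, the filtering step adds only a polynomial overhead and the overall running time remains $O^*(2^{k\log_2 k})$.

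The one thing that actually needs an argument is correctness, i.e.\ that the minimum is attained by some plan that is weight-equal to a member of $\PF$. I would argue as follows for the first version (the other is symmetric). Let $\pi^*$ be any plan minimizing $\omega_A$ subject to $\omega_C \le B$. If $\pi^*$ is Pareto optimal, then by definition of a Pareto front there exists $\tau \in \PF$ with $\omega_C(\tau) = \omega_C(\pi^*)$ and $\omega_A(\tau) = \omega_A(\pi^*)$, so the algorithm will consider $\tau$ and return a plan of weight $\omega_A(\pi^*)$. Otherwise, some plan $\pi'$ dominates $\pi^*$, meaning $\omega_C(\pi') \le \omega_C(\pi^*) \le B$ and $\omega_A(\pi') \le \omega_A(\pi^*)$ with strict inequality in the sum; the optimality of $\pi^*$ forces $\omega_A(\pi') = \omega_A(\pi^*)$, so $\pi'$ is itself optimal and strictly dominates $\pi^*$. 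Iterating this argument (the process terminates since the weights take values in a finite set generated by the instance) yields a Pareto optimal optimum, which must be weight-equal to some $\tau \in \PF$.

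I do not expect any real obstacle here: the algorithmic content is entirely contained in Theorem~\ref{thm1}, and the only subtlety is the short structural lemma that an optimum of a single objective subject to a bound on the other is (weight-equal to) a Pareto optimal plan. The main thing to be careful about is the definition of Pareto optimality used in the paper, which compares sums as the tie-breaker; this is why the argument above walks from an arbitrary optimum to a Pareto optimal one rather than asserting Pareto optimality directly. If no plan $\tau$ satisfies the bound $B$, the algorithm simply reports infeasibility, which matches the intended meaning of the statement.
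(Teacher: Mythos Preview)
Your approach is correct and matches the paper's. The paper does not give an explicit proof of this corollary; it simply observes that \BOWSPPF generalizes \BOWSPPO, so solving the Pareto front via Theorem~\ref{thm1} and selecting a feasible plan suffices. Your write-up fills in exactly that argument, including the (correct and careful) justification that an optimum for one objective subject to a bound on the other is always weight-equal to a Pareto optimal plan, which is the only non-trivial point given the paper's sum-based tie-breaking definition of Pareto optimality.
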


For the sake of completeness, we state the following result whose proof can be found in Appendix~\ref{app:pareto-front-may-have-B-k-points}. 

\begin{thm}\label{thm:tightness}
There exist instances of \BOWSP with UI constraints with a Pareto front consisting of ${\cal B}_k$ points.
\end{thm}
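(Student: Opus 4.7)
The plan is to exhibit an instance in which every one of the $\mathcal{B}_k$ patterns on the step set yields a distinct, pairwise incomparable point in the $(\omega_C,\omega_A)$-plane, so that the Pareto front must contain exactly one representative per pattern. I would take $S=[k]$ with the empty ordering and $U=\{u_1,\ldots,u_k\}$, which is large enough to realise every pattern of $S$. The underlying idea is to make $\omega_A$ injective on patterns and then to rank the constraint weight $\omega_C$ in the exact reverse order of $\omega_A$.

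For the set-authorization function, enumerate the $2^k-1$ non-empty subsets of $S$ as $T_1,T_2,\ldots,T_{2^k-1}$ and put $\omega(\emptyset,u)=0$ and $\omega(T_i,u)=2^i$ for every user $u$. For any complete plan $\pi$,
\[
\omega_A(\pi)=\sum_{i\,:\,T_i\in P(\pi)} 2^i,
\]
which is a sum of distinct powers of two and is therefore an injective function of $P(\pi)$; in particular, equivalent plans share the same $\omega_A$, and distinct patterns produce distinct $\omega_A$-values. Order the $\mathcal{B}_k$ patterns as $P_1,\ldots,P_{\mathcal{B}_k}$ by strictly increasing $\omega_A$, and introduce a single constraint $c=(S,\Theta)$ with $\Theta=\set{\pi:P(\pi)=P_{\mathcal{B}_k}}$ and weight $\omega_c(\pi)=\mathcal{B}_k-i$ whenever $P(\pi)=P_i$. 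Since $\Theta$ is closed under permutations of $U$ and $\omega_c$ depends only on the pattern, $(c,\omega_c)$ is a valid weighted UI constraint, and $\omega_c$ vanishes precisely on plans with pattern $P_{\mathcal{B}_k}$.

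To conclude, observe that for any $i<j$ we have $\omega_A(P_i)<\omega_A(P_j)$ while $\omega_C(P_i)=\mathcal{B}_k-i>\mathcal{B}_k-j=\omega_C(P_j)$, so any two pattern-representatives are neither comparable in the dominance order nor weight-equal. Selecting one plan of minimum $\omega_A$ per pattern, exactly as in the proof of Theorem~\ref{thm1}, produces a maximal antichain of Pareto optimal plans of size $\mathcal{B}_k$. The only delicate ingredient is the injectivity of $\omega_A$ on patterns; the super-increasing powers-of-two weighting handles this in a single stroke, and the remainder of the argument is a short verification.
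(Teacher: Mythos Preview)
Your argument is correct: the authorization weight $\omega_A(\pi)=\sum_{T\in P(\pi)}2^{\text{index}(T)}$ is indeed an injective function of the pattern (distinct partitions are distinct subsets of $2^S\setminus\{\emptyset\}$, hence distinct binary strings), the single weighted constraint you define is UI since its weight depends only on $P(\pi)$, and the reverse ordering of $\omega_C$ against $\omega_A$ forces all $\mathcal{B}_k$ pattern-points to be pairwise incomparable and non-weight-equal.

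Your route differs from the paper's in an instructive way. The paper uses only \emph{separation-of-duty} constraints $c_1,\ldots,c_{\binom{k}{2}}$ on all pairs of steps with weights $2^i$, together with $2^{|S|}$ users $u_T$ whose set-authorization weight is finite only on the block $T$ and is chosen so that $\omega_A(\pi_P)+\omega_C(\pi_P)=\sum_{i}2^i$ for every pattern $P$; distinct patterns then give distinct points on a line of slope $-1$. Your construction is more elementary and uses only $k$ users and a single constraint, but that constraint is synthetic---it essentially hard-codes the rank of each pattern. The paper's version is less direct but carries a stronger message: the $\mathcal{B}_k$ lower bound on the Pareto front already holds for the most basic and widely used UI constraints, not just for an ad hoc one. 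A minor caveat on your side is that evaluating your $\omega_c$ requires knowing the rank of $P(\pi)$ in the $\omega_A$-ordering, so if one insists on the blanket assumption that $\omega_C$ be polynomial-time computable, the instance description must tabulate these ranks; this is harmless for an existence statement but worth noting.
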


This result implies that the algorithm described in the proof of Theorem~\ref{thm1} has an optimal\footnote{One may possibly improve only the polynomial factors.} asymptotic worst-case running time, as any such algorithm must output all points from the Pareto front (recall that the running time of the algorithm is actually $O^*({\cal B}_k)$).

In addition, note that Gutin and Wahlstr{\"o}m  \cite{GutinW15} recently proved that WSP with UI constraints cannot be solved in time $O^*(2^{ck\log_2 k})$ for any $c<1$, unless the Strong Exponential Time Hypothesis (SETH)\footnote{The Strong Exponential Time Hypothesis claims there is no algorithm of running time $O^*(2^{cn})$ for SAT on $n$ variables, for any $c < 1$~\cite{ImPaZa01}.} fails. Hence, even concerning Approach 1 only (and even with $0$-$1$ weights only), the algorithm in the proof of Theorem~\ref{thm1} has an optimal asymptotic running time possible, under SETH.

\subsection{Minimizing user cost}\label{sec:mup}

The aim of this subsection is to demonstrate that an interesting and applicable problem, the Cardinality-Constrained Minimum User Problem (CMUP)~\cite{Roy2015}, can be reduced to multiple instances of WSP or an instance of BO-WSP.
Moreover, we can solve CMUP in the presence of arbitrary UI constraints.
CMUP asks, given a satisfiable instance of WSP, what is the minimum number of users required to obtain a valid plan. 
Roy {\em et al.} \cite{Roy2015} motivate their study of CMUP by noting that many companies and other organizations are interested in cost-cutting in their businesses. 
However, CMUP can be useful in other scenarios, such as when a smaller number of users increases security.  

For CMUP, Roy {\em et al.} \cite{Roy2015} used only constraints of the form at-most-$r$ and at-least-$r'$ and suggested using either an Integer Programming Solver (they used CPLEX) or a bespoke greedy-like heuristic generalizing a greedy-like heuristic for coloring hypergraphs. 

Note, however, that one can solve CMUP using any WSP solver that can deal with at-most-$r$ constraints, which are UI for any integer $r$.
In particular, the WSP solver developed by Cohen {\em et al.}~\cite{CoCrGaGuJo14c,KaGaGu} for solving WSP with UI constraints can be used. 
First note that no plan for a satisfiable instance requires more than $k$ users.
Hence, we can use binary search, with repeated calls to the WSP solver, to solve CMUP.
Before making the first call to the WSP solver, we add the constraint $\atmost(S,k/2)$ to $W$, and test whether the resulting instance is satisfiable or not. 
If it is satisfiable, we then replace $\atmost(S,k/2)$ with $\atmost(S,k/4)$ and call the WSP solver; otherwise we replace $\atmost(S,k/2)$ with $\atmost(S,3k/4)$ (and call the WSP solver).
By iterating this process at most $\lceil \log_2 k \rceil$ times, we obtain the minimum number of users (recall that the CMUP instance is required to be satisfiable).

In contrast, the \BOWSP algorithm can solve CMUP directly.  
In fact, \BOWSP allows us to model and solve the following generalization of CMUP.
We will allow any weighted UI constraints. Assume that the users' costs vary (salaries, costs of additional training, etc.) and let $\mu_u > 0$ be the cost of user $u$. Then we can define a weighted set-authorization function as follows:
 \[
 \omega(T,u)=
  \begin{cases}
   \mu_u & \text{if $(u,t) \in A$ for all $t \in T$}, \\
   M & \text{otherwise},
  \end{cases}
\]
where $M$ is a constant such that $M>\sum_{u\in U}\mu_u$. 
(Note here that $\omega(T,u) > 0$ for all $T \ne \emptyset$.)
By definition of \BOWSP, the algorithm will find a valid plan involving a set of users of minimum cost (the CMUP case corresponds to $\mu_u = 1$ for every user $u$). We may also consider decreasing the cost by allowing some ``minor'' constraints to be violated.

\section{\BOWSP algorithms and experimental results}
\label{sec:experiments}

In this section, we describe two concrete algorithms we have developed to solve \BOWSP and study their performance on a range of \BOWSP instances.
The first, which we call \emph{Pattern Branch and Bound}, solves \BOWSPPF directly; the algorithm is FPT and based on patterns.
The second algorithm, which we call \emph{MIP-based}, solves \BOWSPPF by solving multiple instances of \BOWSPPO expressed as a mixed integer programming (MIP) problem.

Note that a Pareto front of \BOWSP may include plans with very large authorization or constraint weights despite the fact that such plans are of little practical interest. 
For that reason we introduce two parameters, $B_A$ and $B_C$, which bound the authorization and constraint weights, respectively.  
Both of our solution methods ignore plans $\pi$ such that $\omega_A(\pi) > B_A$ or $\omega_C(\pi) > B_C$. 
These upper bounds not only reduce the size of the Pareto front, thereby reducing the number of choices that an end-user would need to evaluate, but, as we will show later, also speed up the algorithms.
 
\subsection{Pattern backtracking algorithm}
 
 Pattern Branch and Bound (PBB) is an efficient implementation of the algorithm described in the proof of Theorem~\ref{thm1}.
 The algorithm explores the space of patterns in a depth-first search like manner and, for each complete pattern, computes an optimal complete plan.
 The algorithm maintains a set $\PF$ of Pareto optimal plans found so far.
 When the search is completed, $\PF$ is the full Pareto front.
 
 The search tree of the algorithm is formed as follows.
 Each node of the search tree corresponds to a (partial) pattern.
 Branching in PBB is based on extending the parent pattern with one step.
 Let $P = \set{T_1, T_2, \ldots, T_{|P|}}$ be the (partial) pattern in some node of the search.
 Then PBB will select some step $s \in S \setminus (T_1 \cup T_2 \cup \ldots \cup T_{|P|})$ and extend $P$ with $s$ in every possible way, i.e.
 \begin{align*}
 & \set{T_1 \cup \set{s},\ T_2,\ \ldots,\ T_{|P|}}, \\
 & \set{T_1,\ T_2 \cup \set{s},\ \ldots,\ T_{|P|}}, \\
 & \ldots \\
 & \set{T_1,\ T_2,\ \ldots,\ T_{|P|} \cup \set{s}}, \\
 & \set{T_1,\ T_2,\ \ldots,\ T_{|P|},\ \set{s}}.
 \end{align*}
 
 Like most tree-search algorithms, PBB uses pruning when it can guarantee that the current branch of the search contains no new Pareto optimal solutions.
 Since all weights are non-negative, extending a pattern can only increase the weight of a pattern.
 Thus, if we can establish that $\PF$ already includes a plan that is weight-equal to or dominated by every plan in the current branch of the search, we may terminate the search of the sub-tree rooted at $P$.
 More specifically, for a given pattern $P$ defined over steps $T \subseteq S$, we define
 \[
  \underline{\omega_c}(P) \stackrel{\rm def}{=} \min\set{\omega_c(\pi) : \text{$\pi$ is a complete plan such that $P(\pi|_T) = P$}}.
 \]
 For example, suppose $c$ is an at-least-3 constraint with scope $T$, where $|T| = 5$, and four of the steps in $T$ are already assigned the same user (by pattern $P$).
 Then $\underline{\omega_c}(P)$ is equal to the penalty for assigning two users to the steps in $T$ (since we may assume that this penalty will be no greater than the penalty for assigning a single user to all five steps in $T$).
 We then compute lower bounds for $P$:
 \[
  \underline{\omega_A}(P) \stackrel{\rm def}{=} \sum_{B \in P} \min_{u \in U} \omega(B, u)\quad\text{and}\quad 
  \underline{\omega_C}(P) \stackrel{\rm def}{=} \sum_{c \in C} \underline{\omega_c}(P).
 \]
 Then we may prune the branch rooted at $P$ if one of the following conditions holds:
  \begin{itemize}
   \item there exists a plan $\pi'$ in $\PF$ such that $\omega_A(\pi') \le \underline{\omega_A}(P)$ and $\omega_C(\pi') \le \underline{\omega_C}(P)$;
   \item $\underline{\omega_A}(P) > B_A$;
   \item $\underline{\omega_C}(P) > B_C$.
  \end{itemize}
 
 The size of the search tree depends on the order of steps by which the pattern is extended.
 This order is determined dynamically by a heuristic which selects the constrained step $s$ most likely to reduce the average branching factor.
 The implementation of the heuristic depends on particular types of constraints and users, and in our case was adjusted by automated parameter tuning, as we describe in more detail in Appendix~\ref{sec:bh}.

\subsection{MIP-based algorithm}

 Modern, general-purpose MIP solvers are convenient and powerful tools that can be used to tackle many hard combinatorial optimization problems, thereby reducing costs resulting from the development and maintenance of bespoke systems.
 In this section we show how an MIP solver can be used to solve \BOWSPPF.

 Current MIP solvers can only return a single solution in a run.
 \BOWSPPF, however, is a bi-objective problem with multiple Pareto optimal solutions in general.
 Thus, to solve \BOWSPPF, we need a control algorithm that calls an MIP solver multiple times with different parameters. 
 We now describe the construction of such an algorithm in more detail.

\subsubsection{Computing a Pareto Front}

 Given an instance of \BOWSP and integers $a$, $b$, $c$ and $d$, suppose we have a function {\it mipMinimizeWeight} that can be evaluated using an MIP solver and returns a plan $\pi$ such that
  \begin{inparaenum}[(i)]
   \item either the authorization weight or constraint weight of $\pi$ is minimized, 
   \item $\omega_A(\pi) \in [a,b]$, and
   \item $\omega_C(\pi) \in [c,d]$.
  \end{inparaenum}
 We explain how to construct the required MIP formulation for {\it mipMinimizeWeight} in Section~\ref{sec:mipMinimizeWeight}.
 
 Let us assume that $\mathit{mipMinimizeWeight}(0,a,b,c,d)$ returns a plan $\pi$ such that the \emph{authorization} weight is minimized, and $\mathit{mipMinimizeWeight}(1,a,b,c,d)$  returns a plan such that the \emph{constraint} weight is minimized.
 Then we can compute the Pareto optimal plan $\pi^{\ell}$ such that $\omega_A(\pi^{\ell}) \le \omega_A(\tau)$ for any Pareto optimal plan $\tau$ using the following computations:
 \begin{align}
	\pi &\leftarrow \mathit{mipMinimizeWeight}(0,0,B_A,0,B_C) \label{eqn:initpi} \\
	\pi^{\ell} &\leftarrow \mathit{mipMinimizeWeight}(1,\omega_A(\pi),\omega_A(\pi),0,B_C). \label{eqn:pileft}
 \end{align}
 The method is illustrated schematically in Figure~\ref{fig:computing-pareto-optimal-plan-minimizing-authorization}.
 (The black dot in each sub-figure indicates the optimal plan, subject to the particular weight being minimized, which is indicated by a black arrow.
 The gray area in each sub-figure indicates the region within which we are seeking a plan.)
 
\begin{figure}[!htb]\centering
%
\begin{subfigure}[t]{.45\textwidth}
	\centering
	\includegraphics{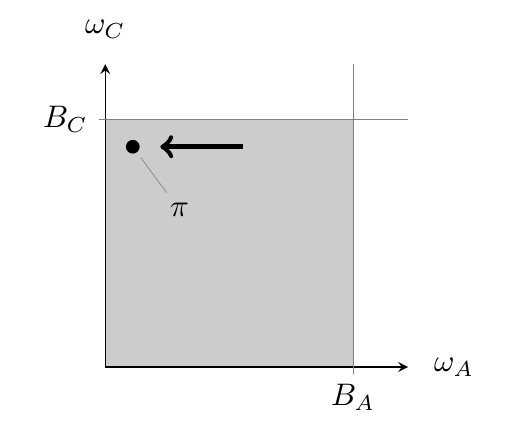}

%
%
%
        \caption{Step 1 (see (\ref{eqn:initpi}))}
\end{subfigure}
\hfill
\begin{subfigure}[t]{.45\textwidth}
	\centering
	\includegraphics{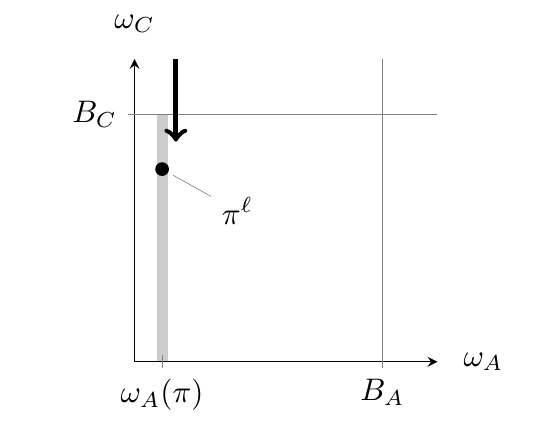}

%
%
%
%
	\caption{Step 2 (see (\ref{eqn:pileft}))}
\end{subfigure}
\caption{Computing a Pareto optimal plan having minimum authorization weight}
\label{fig:computing-pareto-optimal-plan-minimizing-authorization}
\end{figure}

 \noindent%
 Similarly, computing 
 \begin{align}
  \pi &\leftarrow \mathit{mipMinimizeWeight}(1,0,B_A,0,B_C) \\
  \pi^r &\leftarrow \mathit{mipMinimizeWeight}(0,0,B_A,\omega_C(\pi),\omega_C(\pi))
 \end{align}
 will compute a Pareto optimal plan $\pi^r$ having the smallest constraint weight (among all Pareto optimal plans).

 These steps form the basis for computing the entire Pareto front, as illustrated by the pseudo-code in Algorithm~\ref{alg:mip-based}.
 Informally, we compute the ``extremal'' Pareto optimal plans $\pi^{\ell}$ (lines \ref{line:start-compute-left-pareto-optimal}--\ref{line:end-compute-left-pareto-optimal}) and $\pi^r$ (lines \ref{line:start-compute-right-pareto-optimal}--\ref{line:end-compute-right-pareto-optimal}).%
   \footnote{In line~\ref{line:start-compute-right-pareto-optimal}, we wish to find plan $\pi^r$ such that $\omega_C(\pi^r) < \omega_C(\pi^{\ell})$.  
	     However, MIP does not allow strict inequality constraints; thus we must use a constraint $\omega_C(\pi^r) \le \omega_C(\pi^{\ell}) - \epsilon$ for a small $\epsilon > 0$.  
	     Since all the weights in our experiments are integers, we can set $\epsilon = 1$ without loss of correctness.
	     Analogous comments hold for $\omega_C(\pi^r)$ and $\omega_A(\pi^r)$.} 
 Having done that, we iteratively shrink the (two-dimensional) objective space, moving the bottom-right hand corner towards the top left-hand corner, and find the remaining Pareto optimal plans within those successively smaller regions (lines~\ref{line:start-compute-mid-pareto-optimal}--\ref{line:end-compute-mid-pareto-optimal}).
 The operation of the algorithm is illustrated schematically in Figure~\ref{fig:mip-based}.
 
\begin{algorithm2e}[!h]
\caption{MIP-based algorithm to compute a Pareto front}
\label{alg:mip-based}

$\pi \gets \mathit{mipMinimizeWeight}(0,0,B_A,0,B_C)$\;\label{line:start-compute-left-pareto-optimal}
$\pi^{\ell} \gets \mathit{mipMinimizeWeight}(1,\omega_A(\pi),\omega_A(\pi),0,B_C)$\;\label{line:end-compute-left-pareto-optimal}
\If {$\pi^{\ell} = \mathit{(none)}$}
{
	\Return $\emptyset$\;
}

$\PF \gets \set{\pi^{\ell}}$\;

$\pi \gets \mathit{mipMinimizeWeight}(1,\omega_A(\pi^{\ell}) + 1,B_A,0,\omega_C(\pi^{\ell})-1)$\;\label{line:start-compute-right-pareto-optimal}
$\pi^r \gets \mathit{mipMinimizeWeight}(0,\omega_A(\pi^{\ell}) + 1,B_A,\omega_C(\pi),\omega_C(\pi))$\;\label{line:end-compute-right-pareto-optimal}
\If {$\pi^r = \mathit{(none)}$}
{
	\Return $\PF$\;
}

$\PF \gets \PF \cup \set{\pi^r}$\;

\While {$\omega_A(\pi^r) - \omega_A(\pi^{\ell}) > 1$ and $\omega_C(\pi^{\ell}) - \omega_C(\pi^r) > 1$} { \label{line:begin-loop}
	$\pi \gets \mathit{mipMinimizeWeight}(1,\omega_A(\pi^{\ell}) + 1,\omega_A(\pi^r)-1,\omega_C(\pi^r)+1,\omega_C(\pi^{\ell})-1)$\;\label{line:start-compute-mid-pareto-optimal}
	$\pi^c \gets \mathit{mipMinimizeWeight}(0,\omega_A(\pi^{\ell}) + 1,\omega_A(\pi^r)-1,\omega_C(\pi),\omega_C(\pi))$\;\label{line:end-compute-mid-pareto-optimal}
	\If {$\pi^c = \mathit{(none)}$}
	{
		\Return $\PF$\;
	}
	
	$\PF \gets \PF \cup \set{\pi^c}$\;
	$\pi^r \gets \pi^c$\;\label{line:end-loop}
}

\Return $\PF$\;
\end{algorithm2e}

\begin{figure}[!htb]\centering

\begin{subfigure}[t]{.3\textwidth}
	\centering
		\includegraphics{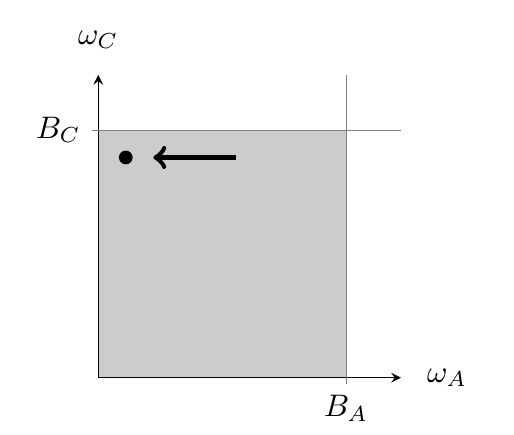}
        \caption{}
	\label{fig:mip1}
\end{subfigure}
\hfill
\begin{subfigure}[t]{.3\textwidth}
	\centering
	\includegraphics{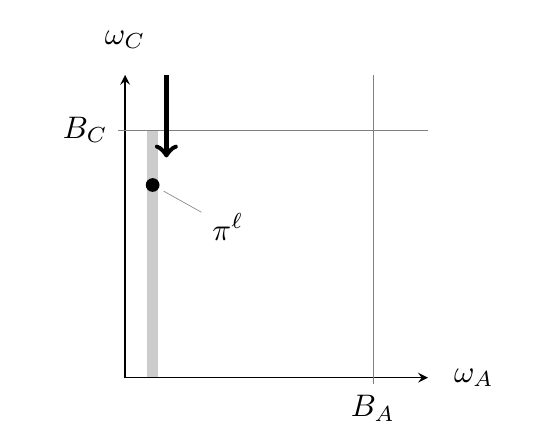}
	\caption{}
	\label{fig:mip2}
\end{subfigure}
\hfill
\begin{subfigure}[t]{.3\textwidth}
	\centering
		\includegraphics{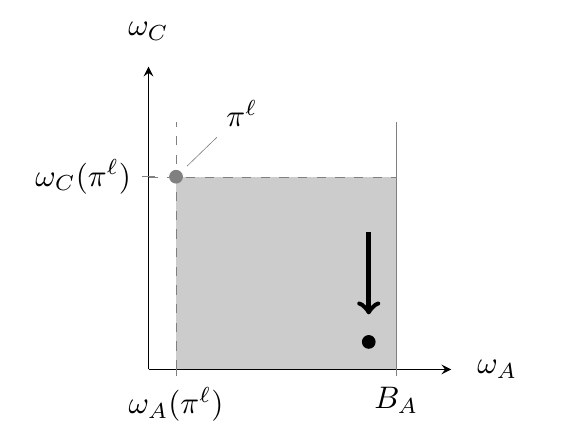}
	\caption{}
	\label{fig:mip3}
\end{subfigure}
\\
\begin{subfigure}[t]{.3\textwidth}
	\centering
		\includegraphics{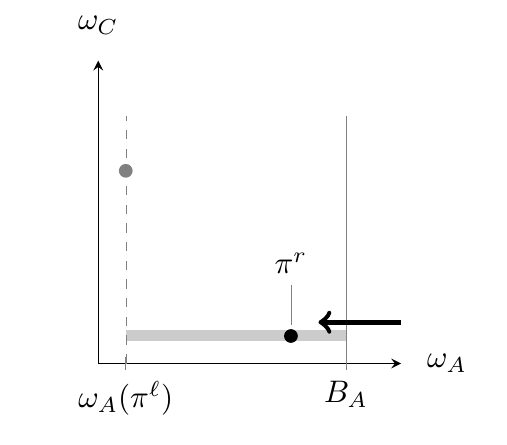}
	\caption{}
	\label{fig:mip4}
\end{subfigure}
\hfill
\begin{subfigure}[t]{.3\textwidth}
	\centering
	\includegraphics{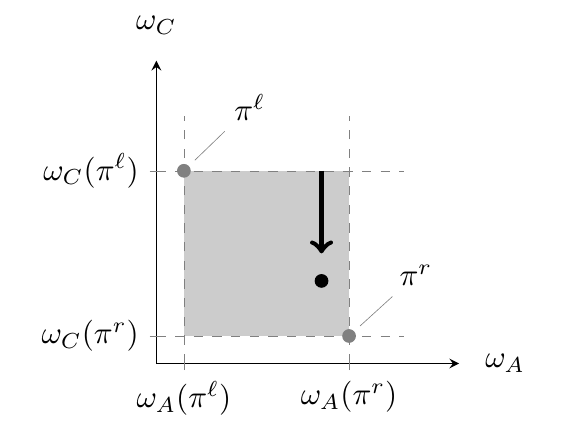}
	\caption{}
\end{subfigure}
\hfill
\begin{subfigure}[t]{.3\textwidth}
	\centering
		\includegraphics{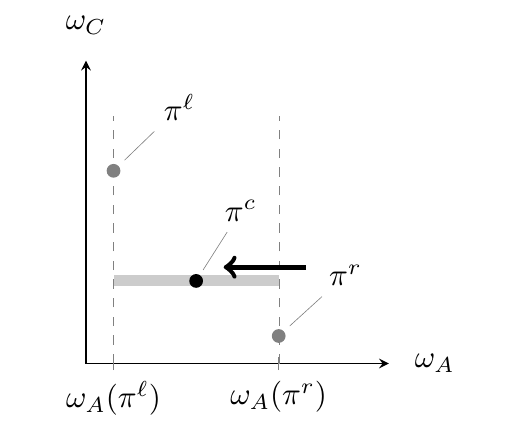}
	\caption{}
	\label{fig:mip6}
\end{subfigure}
\caption{
	Schematic representation of the computation of the Pareto front using our MIP-based algorithm}
\label{fig:mip-based}
\end{figure}

\subsubsection{The function {\it mipMinimizeWeight}}\label{sec:mipMinimizeWeight}

 In order to formulate $\mathit{mipMinimizeWeight}$ as an MIP problem, we define a matrix of binary decision variables $x = (x_{su})$ to represent a plan: specifically, if $x_{su} = 1$ then step $s \in S$ is assigned to user $u \in U$; otherwise step $s$ is assigned to some other user.
 Then the constraint weight $\omega_C(x)$ and the authorization weight $\omega_A(x)$ of the plan defined by $x$ are
 \[
  \omega_A(x) \stackrel{\rm def}{=} \sum_{u \in U} \omega_u(x) \quad\text{and}\quad \omega_C(c) \stackrel{\rm def}{=} \sum_{c \in C} \omega_c(x),
 \]
 where the encodings of $\omega_u(x)$ and $\omega_c(x)$ depend on user and constraint types (and may require the introduction of additional decision variables).%
  \footnote{For example, a separation-of-duty constraint $c$ with scope $\{s, t\}$ requires the introduction of a binary decision variable $p_c$, where $p_c \ge x_{su} + x_{tu} - 1$.
	    Thus, $p_c = 0$ if and only if $s$ and $t$ are assigned to different users.
	    If the violation of $c$ is associated with weight $x$, say, then we define $\omega_c(x) = xp_c$.}
 We describe the encoding of $\omega_u(x)$ and $\omega_c(x)$ for the users and constraints used in our experiments in Appendix~\ref{sec:mip-encodings}.
 
 Having defined the functions $\omega_A(x)$ and $\omega_C(x)$, we formulate the MIP problem as follows:
 \begin{align}
  \label{eq:mip-objective}
   & \text{minimize } (1 - \alpha) \omega_A(x) + \alpha \omega_C(x) \\
   & \text{subject to}
   \label{eq:omega_A_bounds}
     && a \le \omega_A(x) \le b, \\
   \label{eq:omega_C_bounds}
     &&& c \le \omega_C(x) \le d, \\
   \label{eq:mip-x}
     &&& \sum_{u \in U} x_{su} = 1 \qquad \forall s \in S, \\
  \label{eq:x-values}
    &&& x_{su} \in \set{0, 1} \qquad \forall s \in S, \forall u \in U.
 \end{align}
 The parameter $\alpha$ determines the relative importance of the authorization and constraint weights.
 In particular, if $\alpha = 0$ then the problem is to minimize the authorization weight only; and if $\alpha = 1$ then it is to minimize the constraint weight only.
 Constraint~\eqref{eq:x-values} ensures that each variable $x_{su}$ takes binary values.
 Constraints~\eqref{eq:omega_A_bounds} and~\eqref{eq:omega_C_bounds} restrict the search space to the bounds defined by $a$, $b$, $c$ and $d$.
 Finally, constraint~\eqref{eq:mip-x} ensures that each step is assigned to exactly one user.
 The solution $x$ can be trivially translated into the plan $\pi$ to be returned by $\mathit{mipMinimizeWeight}$: $\pi(s) = u$ if and only if $x_{su} = 1$.

\subsection{Testbed}
\label{sec:testbed}

 We use a pseudo-random instance generator to produce benchmark instances.
 Our generator is an extension of generators for WSP~\cite{KaGaGu} and Valued WSP~\cite{CrGuKa15}.
 
 For ease of reference, we define the \emph{authorization density} of a WSP instance to be the average
number of permitted steps per user, where a step is said permitted if its
corresponding cost is not prohibitive, \textit{i.e.} is smaller than a given constant $M$ (see later).
 Similarly, we define the \emph{separation-of-duty constraint density} to be the ratio of the number of separation-of-duty constraints to $\binom{k}{2}$, since there are at most $\binom{k}{2}$ choices for pairs of steps in a separation-of-duty constraint.
 
 Our \BOWSP instance generator takes the following parameters: (i) number of steps $k$, (ii) authorization density $d$, (iii) separation-of-duty constraint density $e$, and (iv) seed value for the pseudo-random number generator.
  All the constraints are generated randomly.
 In particular, the scopes are produced randomly and uniformly such that no two constraints of the same type have identical scopes; penalties (i.e., weights) are also selected randomly and uniformly, with allowed ranges given in Figure~\ref{fig:constraint-penalties}. Our choice of penalties is based on the fact that we view at-most-3 constraints as soft constraints (i.e., constraints that a plan may violate, if needed) and we wish the penalty of assigning four users to the scope of such a constraint to be smaller than assigning five. We view at-least-3 constraints as ``semi-soft'', i.e., we issue a small penalty for assigning just two users, but basically forbid assigning just one (so that we forbid ``dictators''). We view separation-of-duty constraints as hard constraints, i.e. constraints that a plan must satisfy. 
The values of parameters and penalty functions used in our generator are obtained empirically such that the instances are relatively hard to solve.
The generator produces $\left\lfloor e \binom{k}{2} + \frac{1}{2} \right\rfloor$ separation-of-duty constraints, $k$ at-most-3 constraints and $k$ at-least-3 constraints.

\begin{figure}[htb]
\begin{subfigure}[b]{.3\textwidth}\centering
	\begin{tabular}{@{} cr @{}}
		\toprule
 		Distinct users & Penalty \\
		\midrule
		1 & 0 \\
		2 & 0 \\
		3 & 0 \\
		4 & 3--5 \\
		5 & 10--15 \\
		\bottomrule
	\end{tabular}
\caption{At-most} 
\end{subfigure}
\hfill
\begin{subfigure}[b]{.3\textwidth}\centering
	\begin{tabular}{@{} cr @{}}
		\toprule
 		Distinct users & Penalty \\
		\midrule
		1 & 1,000,000 \\
		2 & 1--3 \\
		3 & 0 \\
		4 & 0 \\
		5 & 0 \\
		\bottomrule
	\end{tabular}
\caption{At-least}
\end{subfigure}
\hfill
\begin{subfigure}[b]{.3\textwidth}\centering
	\begin{tabular}{@{} cr @{}}
		\toprule
 		Distinct users & Penalty \\
		\midrule
		1 & 1,000,000 \\
		2 & 0 \\
		\\
		\\
		\\
		\bottomrule
	\end{tabular}
 \caption{Separation-of-duty constraints} 
\end{subfigure}
 \caption{Penalties associated with constraints produced by our \BOWSP instance generator; where a range is given, the penalty is selected from the range uniformly at random}\label{fig:constraint-penalties}
\end{figure}

 The generator creates $10k + 10$ users, based on the assumption that the number of users will be an order of magnitude greater than the number of steps.
 In particular, it produces $10k$ \emph{staff members} and 10 \emph{(external) consultants}.
 The weight function for these two types of users are different.
 In particular, consultants are not authorized for any steps, so there is always a non-zero penalty associated with any plan that assigns a consultant to a step.
 
 We assume that each member of staff $u$ is authorized for some subset $A_u$ of $S$ and, in the absence of authorized users, could be assigned to some subset $B_u$ of $S$ (thereby incurring a modest penalty).
 Thus, each staff member $u$ is assigned two non-intersecting sets of steps, $A_u, B_u \subset S$, where $|A_u|$ has Poisson distribution\footnote{We artificially cap the size $|A_u|$ by $k - 2$.} with $\lambda = d$, and $|B_u| = 2$.
 Then the penalty function $\omega(T, u)$ for a staff member $u$ is defined to be 
  \[ \omega(T, u) = \sigma_u \cdot |T \cap B_u| + M \cdot |T \setminus (A_u \cup B_u)|, \] 
 where $\sigma_u \in [5, 15]$ is selected for each member of staff randomly and uniformly, and $M$ is a large constant (1,000,000 in our implementation).
 
 Similarly, each consultant $u$ is assigned a set $B_u$ such that $|B_u|$ has Poisson distribution (capped artificially by $k$) with $\lambda = d$.
 This set represents steps that a consultant could perform in the absence of authorized users without incurring too large a penalty.
 Then the penalty function for consultants is defined to be
 $$
 \omega(T, u) = 
 	\begin{cases}
 		0 & \text{if } T = \emptyset, \\
 		\sigma_u & \text{if } T \subseteq B_u, \\
 		M & \text{otherwise,}
	\end{cases}
 $$
 where $\sigma_u \in [10, 30]$ is selected randomly and uniformly.
 
 The MIP encodings of the (weighted) constraints used in our experiments are described in Appendix~\ref{sec:mip-encodings}. 
 The source code of our instance generator can be found at~\cite{sourceSACMAT2015}. 

\subsection{Computational experiments}
\label{sec:experimentsresults}

 In this section we report the results of our computational experiments with \BOWSP.
 Our algorithms are implemented in C\#, and the MIP solver used by the MIP-based algorithm is CPLEX~12.6.
 Our test machine is based on two Intel Xeon CPU E5-2630 v2 (2.6~GHz) and has 32~GB RAM installed.
 Hyper-threading is enabled, but we never run more than one experiment per physical CPU core concurrently, and concurrency is not exploited in any of the tested solution methods.
 
 In Figure~\ref{fig:pareto-fronts}, we show examples of optimal solutions (full Pareto fronts) of four \BOWSP instances with various authorization densities $d$ and separation-of-duty constraint densities $e$.
 The objective upper bounds $B_A$ and $B_C$ are set to 1000 in all four cases.
 The relatively lightly constrained instance ($d = 20\%$ and $e = 10\%$) is almost satisfiable; it has only two Pareto optimal solutions $(0, 1)$ and $(5, 0)$, where each tuple gives the objectives $(\omega_A(\pi), \omega_C(\pi))$.
 All other instances used in this experiment do not have plans that would satisfy all the constraints (at least for $\omega_A(\pi) \le B_A$), but they all have plans that satisfy all the authorizations.
 As we noted earlier, given at least one user is authorized for each step $s \in S$, there necessarily exists a plan $\pi$ such that $\omega_A(\pi) = 0$; $\omega_C(\pi)$, however, may well be very large.
%

\begin{figure}[h]\centering
	\includegraphics{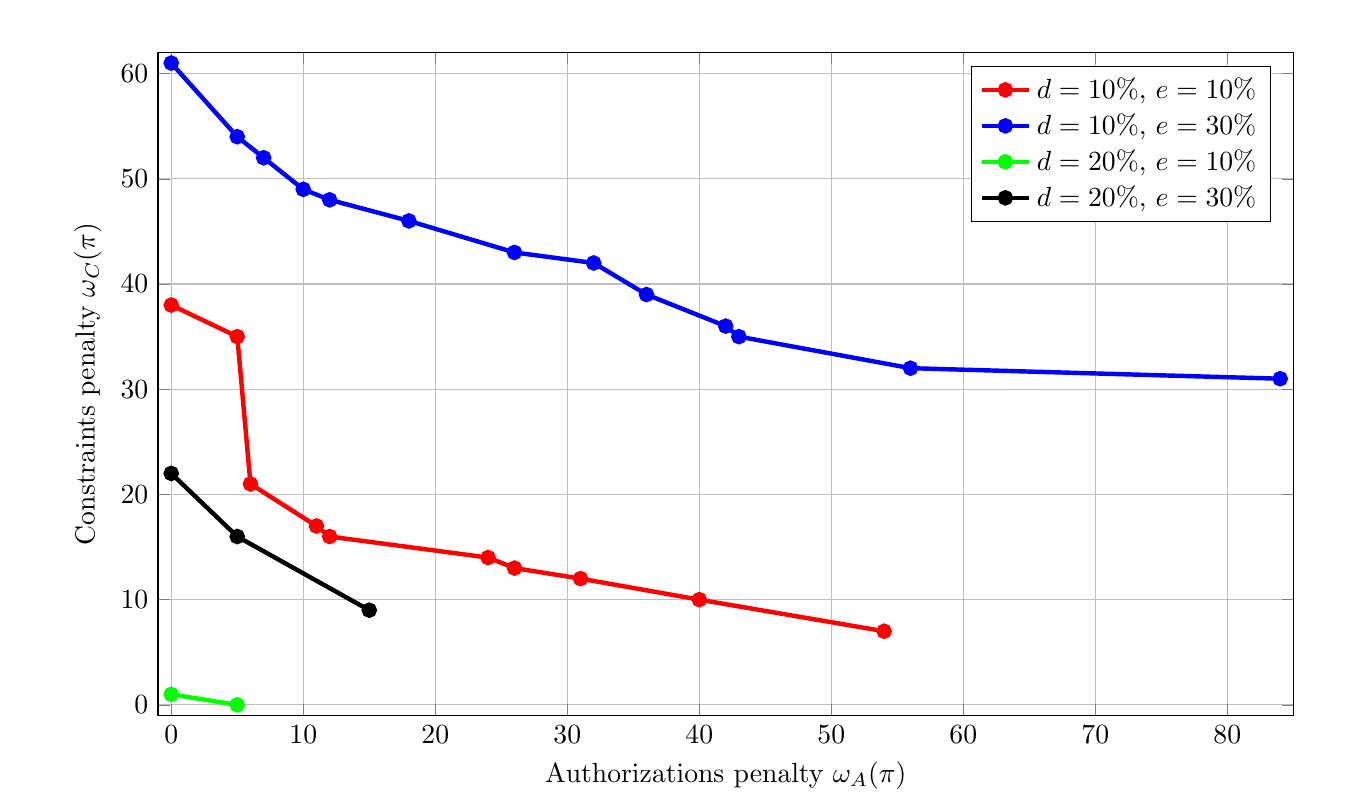}

%
%
%
\caption{Optimal solutions for four \BOWSP instances}\label{fig:pareto-fronts}
\end{figure}

 In Figure~\ref{fig:scaling-k}, we compare the performance of the PBB (solid lines) and MIP-based algorithms (dashed lines) as the number $k$ of steps grows.
 Each experiment is repeated 100 times (on 100 different instances obtained by varying the pseudo-random number generator seed value); median running times have been reported; and $B_A = B_C = 1000$.	
 PBB shows a slightly super-exponential growth, consistent with the worst case time complexity $O^*({\cal B}_k) = O^*(2^{k \log_2 k})$.
 The scaling factor is roughly the same for all the classes of instances, but the constant factor is higher for the more constrained instances.
 
 The MIP-based algorithm is significantly slower than PBB on most of the instances, but shows better scaling than PBB on instances with low authorization density.
 This suggests that the branching heuristic of PBB will need to be improved in the future to better account for authorizations.
 On the other hand, PBB is an FPT algorithm and, thus, its running time scales \emph{polynomially} with the number of users $n$ (see Section~\ref{sec:fpt-results}), whereas the MIP-based algorithm scales \emph{exponentially} with $n$.
 Indeed, our MIP formulation does not exploit the FPT nature of \BOWSP, and it is very unlikely that an MIP solver will be able to discover the FPT nature of the problem itself.
 The exponential scaling of the MIP-based algorithm's running time was also confirmed empirically (the results are not reported here).
 As a result, the MIP-based algorithm is particularly weak when $n$ is large relative to $k$.
 
\begin{figure}[h]\centering
	\includegraphics{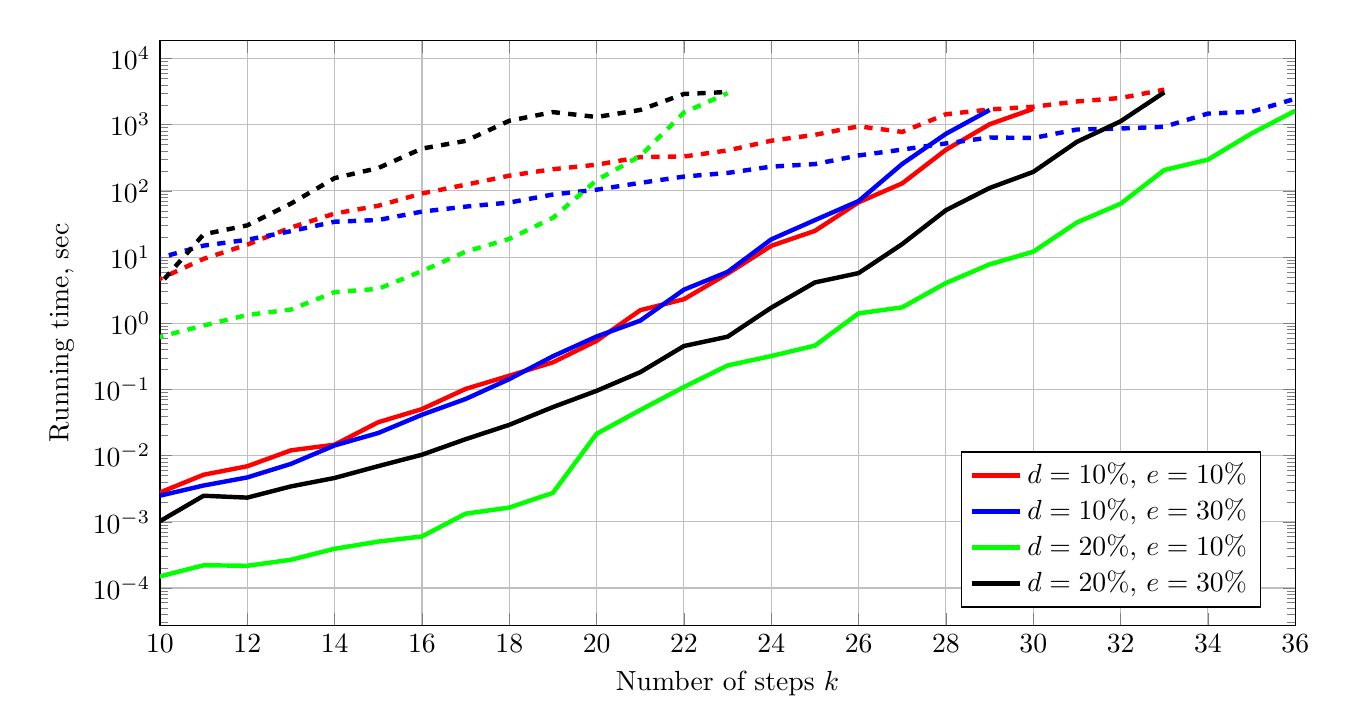}
	
%
%
%
\caption{A comparison of running times for the PBB and the MIP-based algorithms. We use dashed lines for the MIP algorithm and solid for PBB.}
\label{fig:scaling-k}
\end{figure}

 We have seen in Figure~\ref{fig:pareto-fronts} that most of the Pareto-optimal plans in our instances have moderate authorization and constraint weights.
 Nevertheless, parameters $B_A$ and $B_C$ play an important role in the performance of our algorithms.
 Figure~\ref{fig:scaling-b} shows how the running times of PBB and the MIP-based algorithm depend on $B_A$ and $B_C$. Here we are solving instances with parameters $k = 20$, $d = 10\%$ and $e = 30\%$.
 Each experiment is repeated 100 time (on 100 different instances) and the median running time is reported.
 It is easy to see that both approaches are sensitive to the values of $B_A$ and $B_C$, but PBB is particularly good at exploiting tight upper bounds.
 This suggests that setting realistic values for $B_A$ and $B_C$ can significantly improve the performance of PBB (recall that in all the previous experiments we used relatively large values $B_A = B_C = 1000$).

\begin{figure}[h]\centering
	\includegraphics{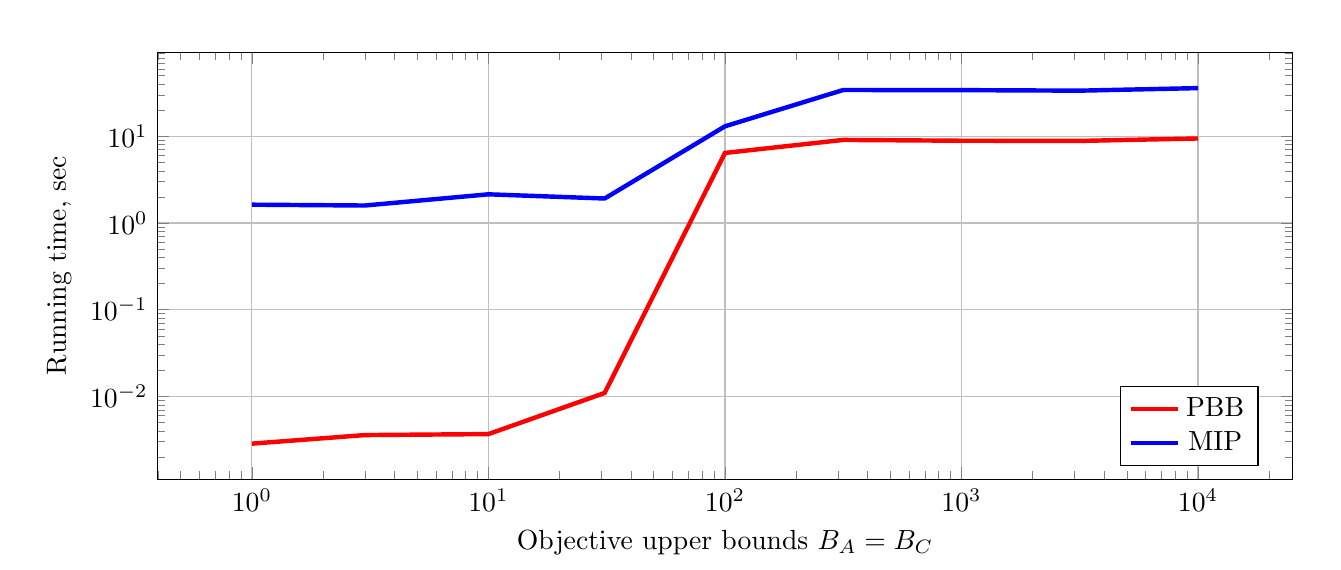}
	
%
\caption{Dependence of the algorithms' performance on the objective upper bounds $B_A$ and $B_C$}\label{fig:scaling-b}
\end{figure}

\section{Resiliency}\label{sec:resiliency}

If we assume that all users are available at all times and that all constraints are UI, then the authorization relation $A$ is fixed for the duration of a workflow instance.
Moreover, the construction of a plan is independent of the order in which the steps are executed in any workflow instance.  
However, in practice, users may be unavailable for a wide variety of reasons, which means, essentially, that the authorization relation varies over time.
If a user $u$ has been assigned to a step $s$ by a plan $\pi$ for a workflow instance $I$ and that user is absent, then $I$ cannot complete until $u$ becomes available again.
Similarly, if users are not pre-assigned to steps in a workflow instance $J$ but the only users authorized for the steps that are ready to be performed in $J$ are unavailable, then the execution of the workflow instance is suspended until one or more of those users becomes available.

Resiliency is the study of workflow satisfiability under the assumption that the user population may vary over time.
It was first studied by Wang and Li~\cite{WaLi10} and a number of results are known about the computational complexity of decision problems associated with workflow resiliency.
In this section, we establish the first FPT results for these decision problems.
Then, acknowledging the argument that knowing the answer to these decision problems is unlikely to be of practical use~\cite{MaMoMo14}, we introduce a new model for resiliency and relate problems associated with this model to \BOWSP.

Before proceeding further, it is useful to introduce the notions of order ideals and linear extensions in a partially ordered set.
An \emph{order ideal} $Y$ in a partially ordered set $(X,<)$ is a subset of $X$ such that for any $y \in Y$ and any $x \in X$, $x < y$ implies $x \in Y$.  
A partially complete instance of a workflow specification $W = ((S,<),U,A,C)$ may be modeled as a pair $(W,\pi)$, where $\pi$ is a partial plan $\pi : T \rightarrow U$ for some order ideal $T$ in the partial order defined by $(S,<)$.  
The set $T$ models the steps that have been executed in the workflow instance. 

The set of steps $\{x \in S : \{x\} \cup T \text{ is an order ideal in } S\}$ denotes the set of \emph{ready} steps.  
Thus a complete instance defines a sequence of order ideals $[T_1, \dots, T_k]$, where $|T_1| = 1$, $T_{i} \subseteq T_{i+1}$ and $|T_{i+1}| = |T_i|+1$ for all $i \in \{0, \dots, k-1\}$; equivalently, a complete instance defines a sequence of steps $[s_1,\dots,s_k]$, where $\{s_i\} = T_i \setminus T_{i-1}$ (with $T_0$ being assumed to be $\emptyset$) for $i \in [k]$. 
It is easy to show that for any such sequence we have $s_i < s_j$ for all $i \le j$ (as the contrary would mean that a non-ready step had been executed during the workflow instance).  
A sequence of steps with this property is said to be a \emph{linear extension} of $(S,<)$~\cite{DaPr02}. 
Every linear extension represents a possible order of execution of the steps in $S$.
The linear extensions for the workflow depicted in Figure~\ref{fig:example-workflow}, for example, are $[ s_1, s_2, s_3, s_4, s_5, s_6 ]$, $[s_1, s_2, s_3, s_5, s_4, s_6]$, and $[s_1, s_2, s_4, s_3, s_5, s_6]$.

\subsection{The work of Wang and Li}

Wang and Li introduced the notion of workflow resiliency to reason about situations in which one or more users might not be available to execute steps in a workflow~\cite{WaLi10}.
Essentially, workflow resiliency asks whether a workflow remains satisfiable even if some users are not available.
Wang and Li considered three different scenarios of user availability over the lifetime of a workflow instance:
 \begin{itemize}
  \item \emph{static} resiliency is concerned with the scenario that some fixed subset of users is unavailable;
  \item \emph{decremental} resiliency is concerned with the scenario that the set of available users decreases monotonically; and
  \item \emph{dynamic} resiliency is concerned with the scenario that an arbitrary set of users is unavailable at any given point in time.
 \end{itemize}
More formally, we have the following definitions~\cite{WaLi10}.
In Section~\ref{sec:res:wangli}, we present equivalent and, arguably, more intuitive definitions in order to prove our FPT results.
 
\begin{df}
Given an integer $t \ge 0$, a constrained workflow authorization schema $((S, <), U, A, C)$ is {\em statically $t$-resilient} if, for all $U' \subseteq U$ such that $|U'| \ge n-t$, $((S, <), U', A, C)$ is satisfiable.
\end{df} 

 
\begin{df} \label{def:decrres}
Given an integer $t \ge 0$, a constrained workflow authorization schema $((S, <), U, A, C)$ is {\em decrementally $t$-resilient} if and only if Player 1 can always win the following two-person game when playing optimally:\\
Initialization: $U_0 \gets U$, $S_0 \gets S$, $t_0 \gets t$ and $i \gets 1$\\
Round $i$ of the game:
\begin{compactenum}
	\item Player 2 selects a set of unavailable users $U'_{i-1}$ such that $|U'_{i-1}| \le t_{i-1}$
	
	$U_i \gets (U_{i-1} \setminus U'_{i-1})$ and $t_i \gets (t_{i-1} - |U'_{i-1}|)$
	\item Player 1 selects a ready step $s_{i} \in S_{i-1}$ and a user $u \in U_i$
	
	$\pi(s_{i}) \gets u$ and $S_i \gets S_{i-1} \setminus \{s_{i}\}$
	
	if $\pi$ is not a valid partial plan with respect to $s_{1}, \dots, s_{i}$, then Player 1 loses.
	\item if $S_i = \emptyset$, then Player 1 wins; otherwise, let $i \gets i+1$ and the game goes on to the next round.
\end{compactenum}
\end{df}


\begin{df}\label{def:dynres}
Given an integer $t \ge 0$, a constrained workflow authorization schema $((S, <), U, A, C)$ is {\em dynamically $t$-resilient} if and only if Player 1 can always win the following two-person game when playing optimally:\\
Initialization: $S_0 \gets S$, and $i \gets 1$\\
Round $i$ of the game:
\begin{compactenum}
	\item Player 2 selects a set $U'_{i-1}$ such that $|U'_{i-1}| \le t$
	
	$U_i \gets (U \setminus U'_{i-1})$
	\item Player 1 selects a ready step $s_{i} \in S_{i-1}$ and a user $u \in U_i$
	
	$\pi(s_{i}) \gets u$ and $S_i \gets S_{i-1} \setminus \{s_{i}\}$
	
	if $\pi$ is not a valid partial plan with respect to $s_{1}, \dots, s_{i}$, then Player 1 loses.
	\item if $S_i = \emptyset$, then Player 1 wins; otherwise, let $i \gets i+1$ and the game goes on to the next round.
\end{compactenum}
\end{df}

Wang and Li established the following results~\cite[Theorems 13--15]{WaLi10}:
\begin{itemize}
 \item determining whether a workflow is statically $t$-resilient is NP-hard (and is in coNP$^{\rm NP})$;
 \item determining whether a workflow is decrementally $t$-resilient is PSPACE-complete; and
 \item determining whether a workflow is dynamically $t$-resilient is PSPACE-complete.
\end{itemize}
The proof of the first result follows from the fact that WSP itself corresponds to the special case $t = 0$.
The remaining results are proved by establishing a reduction from the {\sc Quantified Satisfiability} problem, which is known to be PSPACE-complete.
Note that if a workflow is dynamically $t$-resilient then it is decrementally $t$-resilient; and if a workflow is decrementally $t$-resilient then it is statically $t$-resilient. 
 
\subsection{Fixed-parameter tractability results}\label{sec:res:wangli}

The moves of Player 1 in Definitions~\ref{def:decrres} and~\ref{def:dynres}, after an execution of the game, give a linear extension $[s_1, \dots, s_k]$ of $(S, <)$. On the other hand, the moves of Player 2 give a particular sequence of user sets. Using these ideas, we now give equivalent definitions of $t$-resiliency that are more suitable for developing the algorithm we present in Theorem \ref{thm:fptresiliency}. 
To do so, we first introduce the notion of a $t$-close userset family and then define what it means for a plan to be compatible with a linear extension and a $t$-close userset family.

 \begin{df}
 Let $W=((S, <), U, A, C)$ be a constrained workflow authorization schema, and $t \le |U|$. A family $F=\{U_i\}_{i \in [k]}$ is called a {\em $t$-close userset family} of $U$ if $U_i \subseteq U$ and $|U| - |U_i| \le t$ for all $i \in [k]$.
 \end{df}

 \begin{df}
 Let $W = ((S, <), U, A, C)$ be a constrained workflow authorization schema, $t \le |U|$, $F=\{U_i\}_{i \in [k]}$ be a $t$-close userset family of $U$ and $\ell = [s_1, \dots, s_k]$ be a linear extension of $(S, <)$. We say that a plan $\pi : S \rightarrow U$ is {\em $(F,\ell)$-compatible} if for every $i \in [k]$, $\pi(s_i) \in U_{i}$.
 \end{df}
 
 The following result demonstrates that dynamic $t$-resiliency can be defined in terms of compatible plans and linear extensions (rather than two-player games).
 This turns out to be a more convenient formulation for determining whether a workflow is dynamically $t$-resilient.

\begin{thm}
 Let $W=((S, <), U, A, C)$ be a constrained workflow authorization schema, and $t \le |U|$. $W$ is dynamically $t$-resilient if and only if for every $t$-close userset family $F=\{U_i\}_{i \in [k]}$ of $U$, there exists a linear extension $\ell = [s_1, \dots, s_k]$ of $(S, <)$ and a valid plan for $W$ which is $(F,\ell)$-compatible. 
 \end{thm}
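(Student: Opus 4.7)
The plan is to prove the two directions of the equivalence separately, with the forward direction being essentially a reformulation and the reverse direction requiring a constructive strategy argument.

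For the forward direction, I would fix an arbitrary $t$-close userset family $F=\{U_i\}_{i\in[k]}$ and construct an \emph{oblivious} Player~2 strategy that, in round $i$, plays $U'_{i-1}=U\setminus U_i$ regardless of Player~1's history. This is a legal strategy since $|U\setminus U_i|\le t$ by definition of a $t$-close userset family. Dynamic $t$-resiliency then yields a winning Player~1 counter-strategy; executing this strategy against the oblivious Player~2 produces a sequence of assignments $s_1\mapsto u_1,\ldots,s_k\mapsto u_k$ that simultaneously defines a linear extension $\ell=[s_1,\ldots,s_k]$ of $(S,<)$ and a valid complete plan $\pi$. Because at round $i$ Player~1 was restricted to $U_i$, we have $\pi(s_i)=u_i\in U_i$ for every $i$, so $\pi$ is $(F,\ell)$-compatible, as required.

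For the reverse direction I would construct a winning strategy for Player~1 in the dynamic resiliency game, using the family condition as an oracle. At round $i$, after observing $U_1,\ldots,U_i$ and the partial play $(s_1,u_1),\ldots,(s_{i-1},u_{i-1})$, Player~1 considers the collection of $t$-close userset families whose first $i$ entries agree with $U_1,\ldots,U_i$; for each such family the hypothesis supplies a valid compatible plan. Player~1 then selects $(s_i,u_i)$ that extends the current partial play in a way consistent with a valid completion for \emph{every} possible adversarial continuation.

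The main obstacle I expect is justifying that such a uniformly consistent choice of $(s_i,u_i)$ always exists, rather than requiring different moves against different continuations. I would attack this by backward induction on the number of remaining rounds $k-i$: the inductive step requires showing that among the compatible plans returned by the family condition for the various possible continuations $U_{i+1},\ldots,U_k$, one can always identify at least one move $(s_i,u_i)$ that preserves a winning position in the residual game on $S\setminus\{s_1,\ldots,s_i\}$. This selection argument, which essentially asserts that the family condition is hereditary along adversary continuations, is the technical heart of the proof and where most of the care is needed to ensure the quantifier order in the definition of compatibility is respected.
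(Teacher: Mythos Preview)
Your forward direction is correct and essentially identical to the paper's: an oblivious Player~2 playing the fixed family $F$ is a legal strategy, and Player~1's winning response against it furnishes the required $(\ell,\pi)$.

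For the reverse direction you have correctly isolated a genuine difficulty that the paper's own proof glosses over. The paper simply writes that ``moves of Player~2 can be represented by a $t$-close userset family $F$'' and invokes the hypothesis---implicitly treating Player~2 as having committed to all of $U_1,\ldots,U_k$ in advance, whereas in the game Player~2 may choose $U_i$ adaptively after seeing Player~1's first $i-1$ moves. Your backward-induction plan is the natural attempt to close this gap, but in the stated generality it cannot succeed: the equivalence is false for arbitrary constraints. Take $S=\{a,b\}$ with $a,b$ incomparable, $U=\{u_1,u_2,u_3,u_4\}$, $t=1$, all pairs authorized, and a single constraint on $\{a,b\}$ admitting exactly the two assignments $\pi_1=(a\mapsto u_1,\,b\mapsto u_2)$ and $\pi_2=(a\mapsto u_3,\,b\mapsto u_4)$. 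An adaptive Player~2 plays $U_1=U$, observes Player~1's first move, and then deletes the unique user needed to complete that move to a valid plan; hence $W$ is not dynamically $1$-resilient. Yet for every $1$-close family $F=(U_1,U_2)$ each $U_i$ omits at most one user, and since the four compatibility requirements for $(\ell,\pi)\in\{[a,b],[b,a]\}\times\{\pi_1,\pi_2\}$ involve four distinct users on the $U_1$ side and four distinct users on the $U_2$ side, at most two of them can fail---so at least one succeeds and the family condition holds. Thus the ``uniformly consistent choice of $(s_i,u_i)$'' you hope to extract need not exist, and your inductive step would break already at $i=1$. The reverse implication as stated appears to be false; any correct version would need additional hypotheses (plausibly the user-independence assumed in the downstream FPT theorem), and even under such a hypothesis a genuine argument beyond what either you or the paper provides would be required.
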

 \begin{proof}
Suppose that $W$ is dynamically $t$-resilient, and suppose we are given a $t$-close userset family $F=\{U_i\}_{i \in [k]}$ of $U$. Then by assumption, if we run the game with Player 2 making choices corresponding to $F$, then there exists a sequence of moves for Player 1, that is, a linear extension $\ell = [s_1, \dots, s_k]$ of $(S, <)$ together with a valid plan $\pi : S \rightarrow U$ such that $\pi(s_{i}) \in U_i$. In other words, $\pi$ is $(F, \ell)$-compatible.

Conversely, moves of Player 2 can be represented by a $t$-close userset family $F=\{U_i\}_{i \in [k]}$. By hypothesis, there exists a linear extension $\ell = [s_1, \dots, s_k]$ of $(S, <)$ and a valid plan $\pi : S \rightarrow U$ which is $(F, \ell)$-compatible, i.e. such that Player 1 can win by picking step $s_{i}$ at iteration $i$.
 \end{proof}
 
It is easily seen that similar statements can be made for static and decremental resiliency by selecting appropriate conditions on the $t$-close userset families.
 \begin{thm}
 Let $W=((S, <), U, A, C)$ be a constrained workflow authorization schema, and $t \le |U|$. $W$ is statically $t$-resilient if and only if for every $t$-close userset family $F=\{U_i\}_{i \in [k]}$ of $U$ such that $U_i = U_j$ for all $i, j \in [k]$, there exists a linear extension $\ell = [s_1, \dots, s_k]$ of $(S, <)$ and a valid plan for $W$ which is $(F,\ell)$-compatible. 
 \end{thm}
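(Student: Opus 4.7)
The plan is to mirror the proof of the previous theorem (for dynamic $t$-resiliency) essentially verbatim, exploiting the fact that the extra condition $U_i = U_j$ for all $i,j$ corresponds precisely to the ``fixed subset of unavailable users'' interpretation of static resiliency. The correspondence is: a $t$-close userset family with all $U_i$ equal to some common set $U'$ is just a choice of $U' \subseteq U$ with $|U \setminus U'| \le t$, i.e.\ $|U'| \ge n - t$, which is exactly the quantification in the definition of static $t$-resiliency.

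For the forward direction, I would assume $W$ is statically $t$-resilient and take an arbitrary $t$-close userset family $F = \{U_i\}_{i \in [k]}$ with $U_i = U'$ for all $i$. Since $|U'| \ge n - t$, the schema $((S,<), U', A, C)$ is satisfiable by assumption, so there is a valid plan $\pi : S \to U'$. Pick any linear extension $\ell = [s_1, \dots, s_k]$ of $(S, <)$ (such an extension exists for any finite partial order). Because $\pi(s_i) \in U' = U_i$ for every $i \in [k]$, the plan $\pi$ is $(F, \ell)$-compatible, as required.

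For the reverse direction, I would take an arbitrary $U' \subseteq U$ with $|U'| \ge n - t$ and define the constant family $F = \{U_i\}_{i \in [k]}$ by $U_i := U'$ for every $i$. This $F$ is a $t$-close userset family with the prescribed equality condition, so by hypothesis there is some linear extension $\ell = [s_1, \dots, s_k]$ and a valid $(F, \ell)$-compatible plan $\pi$. The compatibility condition $\pi(s_i) \in U_i = U'$ for all $i$ says exactly that $\pi$ maps into $U'$, so $\pi$ is a valid plan for $((S, <), U', A, C)$, establishing its satisfiability.

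There is no real obstacle here; the only subtlety is to be explicit about the equivalence ``constant userset family'' $\longleftrightarrow$ ``fixed subset $U'$ of size at least $n-t$'', and to note that any finite partial order admits a linear extension so that the forward direction produces a valid witness $\ell$. No game-theoretic argument is needed (unlike the dynamic case) precisely because static resiliency removes the interactive aspect of Player 2's moves.
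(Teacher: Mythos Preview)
Your proposal is correct and is exactly the natural argument; the paper itself does not spell out a proof of this theorem, merely remarking that ``similar statements can be made for static and decremental resiliency by selecting appropriate conditions on the $t$-close userset families.'' Your direct argument---identifying a constant $t$-close userset family with a single subset $U' \subseteq U$ of size at least $n-t$ and then invoking the definition of static $t$-resiliency in both directions---is precisely what the paper has in mind, and is in fact simpler than the game-based proof for the dynamic case since no interactive adversary needs to be handled.
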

 
  \begin{thm}
 Let $W=((S, <), U, A, C)$ be a constrained workflow authorization schema, and $t \le |U|$. $W$ is decrementally $t$-resilient if and only if for every $t$-close userset family $F=\{U_i\}_{i \in [k]}$ of $U$ such that $U_{i+1} \subseteq U_i$ for all $i \in [k-1]$, there exists a linear extension $\ell = [s_1, \dots, s_k]$ of $(S, <)$ and a valid plan for $W$ which is $(F,\ell)$-compatible.
 \end{thm}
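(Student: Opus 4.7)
The plan is to adapt the proof of the preceding theorem for dynamic resiliency, with the key additional observation that the chain condition $U_{i+1} \subseteq U_i$ precisely captures the monotonically decreasing availability of users in the decremental game, and the $t$-close property captures the total budget cap $t$.

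For the forward direction, I would assume $W$ is decrementally $t$-resilient and fix a monotonically decreasing $t$-close userset family $F=\{U_i\}_{i \in [k]}$. I would translate $F$ into a concrete strategy for Player 2 by setting $U'_0 = U \setminus U_1$ and $U'_{i-1} = U_{i-1} \setminus U_i$ for $i \ge 2$ (well-defined thanks to the chain condition). A telescoping argument gives $\sum_{j < i} |U'_j| = |U \setminus U_i| \le t$, which verifies the round-by-round budget constraint $|U'_{i-1}| \le t_{i-1}$ in Definition~\ref{def:decrres}. Since Player 1 wins against this strategy, the resulting moves yield a linear extension $\ell = [s_1, \dots, s_k]$ together with a valid plan $\pi$ satisfying $\pi(s_i) \in U_i$, i.e.\ a $(F,\ell)$-compatible valid plan.

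For the converse, I would start from an arbitrary play of the decremental game and observe that the sets $U_1, U_2, \ldots, U_k$ produced by Player 2's choices automatically form a monotonically decreasing $t$-close userset family: monotonicity follows directly from the update rule $U_i = U_{i-1} \setminus U'_{i-1}$, and $|U| - |U_i| \le t$ follows from the cumulative budget being bounded by $t$. By hypothesis, there exists a linear extension $\ell$ and a valid $(F,\ell)$-compatible plan $\pi$, and Player 1's winning response is simply to play step $s_i$ assigned to $\pi(s_i) \in U_i$ in round $i$; validity of $\pi$ ensures that no partial prefix violates a constraint or authorization.

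The main obstacle is the budget bookkeeping in the forward direction: verifying that the translation of $F$ into Player 2's moves never exceeds the running budget $t_{i-1}$. This reduces to the telescoping identity above, which is straightforward but needs to be stated carefully. Apart from this bookkeeping, the argument is structurally identical to the dynamic case, since the only difference between the two games is the restriction on how Player 2 may alter the available user population between rounds, and this restriction is exactly mirrored by the chain condition on $F$.
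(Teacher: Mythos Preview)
Your proposal is correct and follows exactly the approach the paper takes: the paper does not give a separate proof for this theorem but simply notes that the argument for the dynamic case carries over with the appropriate restriction on the userset families, which is precisely what you do. The telescoping budget computation you spell out is the one extra piece of bookkeeping needed for the decremental variant, and your treatment of it (and of the converse direction) matches the level of detail in the paper's proof of the dynamic case.
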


A na\"ive approach for deciding whether a workflow is $t$-resilient would be to enumerate all possible scenarios (i.e. all possible $t$-close userset families and all linear extensions of $(S, <)$), and then test whether a valid plan is compatible with this scenario. However, such an enumeration would be highly inefficient from a computational point of view, as it would lead to an algorithm running in time $O^*(n^{tk})$.

We now show that it is sufficient to enumerate all $t$-close userset families of a {\em subset} of users whose size depends on $k$ and $t$ only. 
Using this result, we are able to prove that deciding whether a workflow is (statically, decrementally or dynamically) $t$-resilient is FPT parameterized by $k+t$. In other words, if both the number of steps and the maximum number of unavailable users are small, we obtain an efficient algorithm.
It seems reasonable to assume that $t$ will be small relative to $n$ in practice. 

\begin{thm}\label{thm:fptresiliency}
We can decide whether a constrained workflow authorization schema \mbox{$W=((S, <), U, A, C)$} is dynamically $t$-resilient for some $t \in \mathbb{N}$ in time $O^*(f(k, t))$ for some computable function $f$ if all constraints in $C$ are UI.
\end{thm}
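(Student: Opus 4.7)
My strategy is a user-independence-based kernelization followed by an exhaustive search on the reduced instance. Partition $U$ into at most $2^k$ classes indexed by each user's authorization profile $A_u = \{s \in S : (s,u) \in A\}$, and from each class $U_T$ retain an arbitrary subset of size $\min(|U_T|, k(t+1))$; call the resulting schema $W'$, with user set $U'$ of size at most $2^k k(t+1)$. I will prove that $W$ is dynamically $t$-resilient if and only if $W'$ is.

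\textbf{Kernel correctness.} The direction $W' \Rightarrow W$ is immediate: Player 1 simply plays his winning $W'$-strategy inside $W$, ignoring any of Player 2's removals that fall outside $U'$. For $W \Rightarrow W'$, fix a winning strategy $\sigma$ in $W$ and any Player 2 strategy $F = \{U^{\mathrm{rm}}_i\}_{i \in [k]}$ in $W'$; viewing $F$ as a legitimate $W$-strategy (since $U^{\mathrm{rm}}_i \subseteq U'$), $\sigma$ produces a valid $(F,\ell)$-compatible plan $\pi$ using the full set $U$. For each distinct user $v$ in the range of $\pi$ with profile $T$ and $v \notin U'$, I pick a substitute $v' \in U' \cap U_T$ distinct from every previously chosen substitute, from every $\pi$-user of profile $T$ already lying in $U'$, and from every user of $U' \cap U_T$ removed by Player 2 at some round at which $\pi$ assigns $v$. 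The number of forbidden candidates is at most $(q_T - 1) + t r_v \le (k-1) + tk < k(t+1)$, where $q_T \le k$ is the number of $\pi$-users of profile $T$ and $r_v \le k$ is the number of rounds at which $v$ is assigned; hence a substitute always exists. Replacing $v$ by $v'$ throughout $\pi$ yields a plan that remains valid (UI constraints are preserved by user permutation, authorization is preserved by same-profile substitution) and is $(F,\ell)$-compatible in $W'$.

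\textbf{Algorithm on the kernel.} By the reformulation preceding the theorem, $W'$ is dynamically $t$-resilient iff every $t$-close userset family $F = \{U_1, \dots, U_k\}$ of $U'$ admits a linear extension $\ell$ of $(S, <)$ and a valid $(F,\ell)$-compatible plan. The number of such families is at most $((t+1)\binom{|U'|}{t})^k$, a function of $k$ and $t$. For each $F$, I extend the pattern approach of Theorem~\ref{thm1}: enumerate all ${\cal B}_k$ partitions $P = \{T_1, \dots, T_p\}$ of $S$ and all at most $k!$ linear extensions $\ell$; discard $P$ if it violates some UI constraint (which depends only on $P$); otherwise test in polynomial time whether the bipartite graph between the blocks of $P$ and $U'$, with edge $\{T_j, u\}$ iff $u$ is authorized for every step of $T_j$ and $u \in U_i$ whenever $\ell$ places a step of $T_j$ at position $i$, admits a matching saturating the blocks. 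If some $(P,\ell)$ succeeds, $F$ is satisfied; if every $F$ is satisfied, $W'$ (hence $W$) is dynamically $t$-resilient. The overall running time is $O^*(f(k,t))$.

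\textbf{Main obstacle.} The delicate step is the substitution bookkeeping in the kernel correctness argument: one must simultaneously enforce distinctness across substitutes of the same profile, avoidance of Player 2's removals at every round where the original user is assigned, and preservation of UI-constraint validity, all within a per-profile kernel budget that depends only on $k$ and $t$. Once the bound $k(t+1)$ is verified, the rest of the algorithm is a rather direct combination of enumeration over Player 2's strategies with the pattern-plus-matching machinery of Theorem~\ref{thm1}.
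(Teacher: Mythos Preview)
Your proof is correct and follows the same high-level strategy as the paper: reduce the user set to a kernel of size bounded in $k$ and $t$, prove the kernel is resilience-equivalent to the original instance, and then brute-force over all $t$-close userset families, linear extensions, and patterns on the kernel. The difference lies in the kernelization. The paper iterates over every partition $P = \{T_1,\dots,T_p\}$ of $S$ and, for each block $T_j$, marks up to $k+t$ users authorized for all of $T_j$, yielding a kernel of size $O({\cal B}_k \cdot k(k+t))$. You instead group users by their full authorization profile $A_u \subseteq S$ and retain $k(t+1)$ per profile, giving a kernel of size $O(2^k k(t+1))$. Your profile-based reduction is conceptually cleaner---two users with identical profiles are interchangeable for UI constraints, so the substitution argument is very direct---and your per-profile budget $k(t+1)$ explicitly accounts for the fact that a single user may be assigned to several steps and hence must dodge Player~2's removals at up to $k$ distinct rounds (the $t \cdot r_v$ term), whereas the paper's per-block threshold $k+t$ handles this more implicitly. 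Both kernels suffice for the FPT claim; yours is arguably the more transparent of the two.
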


It is important to note that all constraints considered by Wang and Li are UI.
In fact, the class of UI constraints is far larger than the constraints considered by Wang and Li in their work.

\begin{proof}[Proof of Theorem~\ref{thm:fptresiliency}]
The first step of the algorithm consists of reducing the set of users to a size depending on $k$ and $t$ only.
To do so, for any $p \le k$ and any partition $P$ of $S$ into $p$ non-empty disjoint subsets $\{T_1, \dots, T_p\}$, we construct a bipartite graph $G_P$ with partite sets $\{T_1, \dots, T_p\}$ and $U$. 
The adjacencies are defined by the authorization policy: for any $i \in [p]$ and any $u \in U$, we connect $T_i$ and $u$ if and only if $(u, s) \in A$ for all $s \in T_i$. Moreover, for all $i \in [p]$, we note $N(T_i) = \{u \in U: \forall s \in T_i, ~ (u, s) \in A\}$. 
We then proceed to a marking step, which results in a set of marked users $U_m \subseteq U$: 
for every $i \in [p]$, if $|N(T_i)| < k+t$, then mark all users in $N(T_i)$. Otherwise, mark an arbitrarily chosen set of $k+t$ users of $N(T_i)$. 
Doing this for any partition $P$, it is clear that the total number of marked users is at most $k (k+t) {\cal B}_k$, and thus $|U_m| = O(2^{k \log_2 k} k (k+t))$. 

For the remainder of the proof, let $A_m$ be the restriction of $A$ to $U_m$, i.e. $A_m = A \cap (U_m \times S)$, and $W'=((S, <), U_m, A_m, C)$. We also denote $\bar{U}_m = U \setminus U_m$.

\begin{lemma}\label{lemma:equivsol}
$W$ is dynamically $t$-resilient if and only if $W'$ is dynamically $t$-resilient.
\end{lemma}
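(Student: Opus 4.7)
I will establish both implications of the equivalence. One direction is a straightforward projection argument; the other requires taking a plan produced in $W$ and rebuilding it using only users from $U_m$, while preserving its pattern so that UI constraints remain satisfied.

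\textbf{Easy direction ($W'$ resilient $\Rightarrow$ $W$ resilient).} Given a $t$-close userset family $F^\ast = \{U^\ast_i\}_{i \in [k]}$ of $U$ supplied by Player~2 in $W$, I project it to $F = \{U^\ast_i \cap U_m\}_{i \in [k]}$. Since
\[ |U_m \setminus (U^\ast_i \cap U_m)| = |U_m \setminus U^\ast_i| \le |U \setminus U^\ast_i| \le t, \]
$F$ is $t$-close with respect to $U_m$. Applying $W'$'s resiliency yields a linear extension $\ell$ and a valid $(F,\ell)$-compatible plan $\pi : S \to U_m$, and since $\pi(s_i) \in U^\ast_i \cap U_m \subseteq U^\ast_i$ the same plan is $(F^\ast,\ell)$-compatible in $W$.

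\textbf{Hard direction ($W$ resilient $\Rightarrow$ $W'$ resilient).} Starting from $F = \{U'_i\}_{i \in [k]}$ that is $t$-close for $U_m$, I extend it to $F^\ast = \{U'_i \cup \bar{U}_m\}_{i \in [k]}$, which is $t$-close for $U$ because $|U \setminus (U'_i \cup \bar{U}_m)| = |U_m \setminus U'_i| \le t$. By $W$'s resiliency there exist a linear extension $\ell$ and a valid $(F^\ast,\ell)$-compatible plan $\pi : S \to U$ with pattern $P(\pi) = \{T_1,\dots,T_p\}$ and block-users $u_1,\dots,u_p$. I plan to convert $\pi$ into a plan $\pi' : S \to U_m$ with the \emph{same} pattern, so that UI constraint satisfaction transfers for free and only authorization and $(F,\ell)$-compatibility have to be re-checked. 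Concretely, for each block $T_j$ I need a substitute $u'_j \in V_j := N(T_j) \cap U_m \cap \bigcap_{s_i \in T_j} U'_i$, with the $u'_j$ pairwise distinct. Whenever $u_j \in U_m$, the $(F^\ast,\ell)$-compatibility of $\pi$ already forces $u_j \in U'_i$ for every $s_i \in T_j$, so I keep $u'_j := u_j \in V_j$. Whenever $u_j \in \bar{U}_m$, the marking rule forces $|N(T_j) \cap U_m| \ge k+t$, because otherwise every user of $N(T_j)$ would have been marked and $u_j$ could not have escaped $U_m$.

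\textbf{Main obstacle and conclusion.} The principal hurdle is producing all the needed substitutes simultaneously and distinctly. My plan is to invoke Hall's theorem on the bipartite graph with vertex classes $\{T_1,\dots,T_p\}$ and $U_m$, in which $T_j$ is joined to $u$ iff $u \in V_j$. Verifying Hall's condition for an arbitrary $Q \subseteq \{T_1,\dots,T_p\}$ is where the work lies: I have to combine the per-block bound $|N(T_j) \cap U_m| \ge k+t$ for blocks with $u_j \in \bar{U}_m$, the private representatives $u_j \in V_j$ supplied by the remaining blocks, and the global bound $|\bigcup_{i=1}^{k}(U_m \setminus U'_i)| \le kt$ on the pool of users excluded from $U_m$ by $F$. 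The fact that the marking step was performed for \emph{every} nonempty subset $T \subseteq S$, not merely for the blocks of $P(\pi)$, is crucial here: it makes candidate users available uniformly across \emph{all} partitions that might arise as $P(\pi)$, not just for a particular one fixed in advance. Once Hall's condition is verified and a system of distinct representatives extracted, the resulting $\pi'$ is valid in $W'$, uses only users of $U_m$, and is $(F,\ell)$-compatible, establishing the resiliency of $W'$.
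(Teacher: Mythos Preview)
Your easy direction and the overall strategy for the hard direction (extend $F$ by $\bar U_m$, pull back a plan from $W$, then rebuild it block-by-block while preserving the pattern so that UI constraints transfer) coincide with the paper's. The paper carries out the rebuilding by a direct greedy choice per block, whereas you package the same task as a system-of-distinct-representatives problem and appeal to Hall's theorem; that is a reasonable and slightly more systematic reformulation of the same idea.

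The gap is exactly at the point you label ``where the work lies'': you never actually verify Hall's condition, and the three ingredients you list are not sufficient to do so. For a block $T_j$ containing several steps, the set $V_j=N(T_j)\cap U_m\cap\bigcap_{s_i\in T_j}U'_i$ can lose up to $|T_j|\cdot t$ elements relative to $N(T_j)\cap U_m$, because in the \emph{dynamic} setting each $U'_i$ may omit a \emph{different} $t$-subset of $U_m$. Since the marking only guarantees $|N(T_j)\cap U_m|\ge k+t$, you obtain at best $|V_j|\ge (k+t)-|T_j|\,t$, which need not be positive; your global bound $\bigl|\bigcup_i(U_m\setminus U'_i)\bigr|\le kt$ is weaker still. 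A concrete obstruction: take $k=2$, $t=2$, a binding-of-duty constraint on $\{s_1,s_2\}$, every user authorized for both steps; if the ``arbitrarily chosen'' users in the marking happen to be the same four users for every block, then $|U_m|=4$, and the adversary can pick $U'_1,U'_2$ to be complementary two-element subsets of $U_m$, forcing $V_1=\emptyset$ for the unique block $\{s_1,s_2\}$. The paper's greedy argument has the very same lacuna: it writes ``$|N(T_i)\cap U_i|\ge k$'', tacitly treating each block as though it occupied a single position in the linear extension and thereby intersecting with only one $U_i$ rather than with all of them. To make either argument go through one must strengthen the marking (on the order of $k\cdot t$ users per block rather than $k+t$), or supply a finer structural argument that neither you nor the paper provides.
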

\begin{proof}
Suppose that $W$ is dynamically $t$-resilient, and let $F=\{U_i\}_{i \in [k]}$ be  a $t$-close userset family of $U_m$. We set $U_i' = U_i \cup \bar{U}_m$ for each $i \in [k]$, and $F' = \{U_i'\}_{i \in [k]}$. One can see that $F'$ is a $t$-close userset family of $U$, and thus there exists a linear extension $\ell$ of $(S, <)$ and a valid $(F', \ell)$-compatible plan $\pi$ for $W$. Let us denote by $P(\pi) = \{T_1, \dots, T_p\}$ the pattern of $\pi$, with $p \le k$. For all $i \in [p]$, if $\pi(T_i) \in U_m$, then define $\pi'(T_i) = \pi(T_i)$. If $\pi(T_i) \notin U_m$, then it means that $\pi(T_i) \in U_{i}' \cap \bar{U}_m$, and, by construction of the marking algorithm, it implies that $|N(T_i) \cap U_m| = k+t$, and thus $|N(T_i) \cap U_{i}| \ge k$. Hence, there must exist $u \in N(T_i) \cap U_{i}$ such that $\pi(T_j) \neq u$ for all $j \neq i$, and we set $\pi'(T_i) = u$. By doing this, we can check that $\pi'$ is a valid plan for $W'$ that is $(F,\ell)$-compatible.

Conversely, suppose that $W'$ is dynamically $t$-resilient, and let $F=\{U_i\}_{i \in [k]}$ be a $t$-close userset family of $U$. In this case, we define $U'_i = U_i \cap U_m$ for every $i \in [k]$, and $F' = \{U'_i\}_{i \in [k]}$. One can verify that $|U'_i| \ge |U_m|-t$, and thus $F'$ is a $t$-close userset family of $U_m$. Thus, there must exist a linear extension $\ell$ of $(S, <)$ together with a valid plan $\pi$ for $W'$ which is $(F,\ell)$-compatible, and it is easily seen that this plan is both valid for $W$ and $(F',\ell)$-compatible.
\end{proof}

Since the size of $U_m$ now depends on $k$ and $t$ only, we can enumerate all $t$-close userset families of $U_m$: for each such family $F$, we test whether there exists a linear extension $\ell$ of $(S, <)$ (by enumerating all of them) and a valid plan for $W'$ which is $(F,\ell)$-compatible in time $O(g(k, t))$ for some computable function $g$. Since the marking algorithm runs in time $O^*({\cal B}_k)$, the total running time is $O^*(f(k, t))$ for some computable function $f$. Correctness of the algorithm is ensured by Lemma~\ref{lemma:equivsol}.
\end{proof}

It is easily seen that the previous algorithm can be adapted to the decremental and static notions of resiliency. Hence, we have the following:

\begin{thm}
We can decide whether a constrained workflow authorization schema $W=(\mbox{$(S, <)$}, U, A, C)$ is decrementally $t$-resilient (resp. statically $t$-resilient) in time $O^*(f(k, t))$ for some computable function $f$ if all constraints in $C$ are UI.
\end{thm}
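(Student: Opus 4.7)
The plan is to show that the proof of Theorem~\ref{thm:fptresiliency} goes through with only cosmetic changes for the static and decremental settings, because the pattern-based marking procedure produces a set $U_m$ whose size depends only on $k$ and $t$, and this bound is sufficient regardless of which kinds of $t$-close userset families we quantify over. I would therefore reuse verbatim the construction of $U_m$ from the proof of Theorem~\ref{thm:fptresiliency}: enumerate all partitions $P = \{T_1, \dots, T_p\}$ of $S$, build the bipartite graphs $G_P$, and mark, for each block $T_i$, either all of $N(T_i)$ (if $|N(T_i)| < k+t$) or an arbitrary $(k+t)$-subset of $N(T_i)$. This again gives $|U_m| = O(2^{k\log_2 k}\, k(k+t))$ and runs in time $O^*(\mathcal{B}_k)$. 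Let $W' = ((S,<), U_m, A_m, C)$ with $A_m = A \cap (U_m \times S)$ and $\bar U_m = U \setminus U_m$, exactly as before.

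The main step is to prove the appropriate analogue of Lemma~\ref{lemma:equivsol}: $W$ is statically (resp.\ decrementally) $t$-resilient if and only if $W'$ is. The forward directions use the same lifting: given a $t$-close userset family $F = \{U_i\}_{i \in [k]}$ of $U_m$, set $U_i' = U_i \cup \bar U_m$. In the static case, $U_i = U_j$ implies $U_i' = U_j'$; in the decremental case, $U_{i+1} \subseteq U_i$ implies $U_{i+1}' \subseteq U_i'$. So $F'$ satisfies the relevant structural constraint on families of $U$, and the swap argument from Lemma~\ref{lemma:equivsol} (for blocks $T_i$ whose assigned user lies in $\bar U_m$, use that $|N(T_i) \cap U_m| = k+t$ and therefore at least $k$ eligible users are available in $U_i$) converts a $(F',\ell)$-compatible valid plan for $W$ into a $(F,\ell)$-compatible valid plan for $W'$. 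Conversely, given a family $F = \{U_i\}$ of $U$ of the right shape, set $U_i' = U_i \cap U_m$; intersection preserves both equality of blocks (static case) and inclusion $U_{i+1}' \subseteq U_i'$ (decremental case), and $|U_i'| \ge |U_m| - t$ because only users in $U \setminus U_i$ (at most $t$ of them) can be removed from $U_m$. The resulting valid compatible plan for $W'$ is automatically valid and compatible for $W$.

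Having established the equivalence, I would finish exactly as in the dynamic case: enumerate the $t$-close userset families of $U_m$ that satisfy the static (resp.\ decremental) structural constraint, enumerate linear extensions of $(S,<)$, and for each pair $(F, \ell)$ check whether a valid $(F,\ell)$-compatible plan of $W'$ exists. Since $|U_m|$ depends only on $k$ and $t$, all three enumerations and the validity check run in time bounded by a computable function of $k$ and $t$ (UI constraints only affect the validity-check algorithm, which is inherited from Theorem~\ref{thm:fptresiliency}), giving the desired $O^*(f(k,t))$ bound.

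The only step that requires genuine care is verifying that the swap argument inside the equivalence lemma respects the static and decremental structural constraints on families; this is where one could inadvertently break the invariant $U_{i+1} \subseteq U_i$ or $U_i = U_j$. I expect this to cause no actual trouble, because the swap argument only reassigns users within a fixed block $T_i$ and relies on $|N(T_i) \cap U_i| \ge k$, a condition that is unaffected by the relation between $U_i$ and $U_{i+1}$; hence the same argument transports through unchanged.
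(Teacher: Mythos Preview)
Your proposal is correct and follows exactly the approach the paper intends: the paper's own proof of this theorem is the single sentence ``It is easily seen that the previous algorithm can be adapted to the decremental and static notions of resiliency,'' and you have spelled out precisely that adaptation, including the verification that the lifting $U_i \mapsto U_i \cup \bar U_m$ and restriction $U_i \mapsto U_i \cap U_m$ preserve the static and decremental structural constraints on the userset families.
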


\subsection{A new model for resiliency}

Mace \etal argue that knowing whether a workflow is dynamically $t$-resilient or not is unlikely to be of much use in the real world~\cite{MaMoMo14}.
They suggest it would be more valuable to quantify the extent to which a workflow was resilient.
Accordingly, they introduce metrics for ``quantitative resiliency'' and used Markov decision processes to compute these metrics.
Informally, they consider every pair of $t$-close userset family and plan to compute these metrics, making their approach computationally expensive.
Their approach also appears to assume that the \emph{set of unavailable users} is selected uniformly at random from all possible subsets of available users. 
This means, in particular, that the probability of no users being unavailable is the same as $t$ specific users being unavailable -- an assumption that seems unlikely to hold in practice.

While agreeing with the motivation for studying quantitative resiliency, we adopt a different approach and one which enables us to treat questions about quantitative resiliency as instances of \BOWSP.
Our basic approach is to identify three aspects that affect resiliency: the authorization policy, ``known'' unavailability of users, and ``unknown'' unavailability of users.
We start with a constrained workflow authorization schema  $W=((S,<),U,A,C)$, where $A$ is given as a function from $S \times U$ to $\{0, 1\}$ (i.e. $A(s,u) = 1$ if and only if user $u$ is authorized to perform step $s$). 
We are also given a description of user availability, comprising a deterministic component (the ``known'' unavailability) and a probabilistic component (the ``unknown unavailability'').

The deterministic component is specified as a function $\alpha : S \times U \rightarrow \set{0,1}$, where $\alpha(s,u) = 1$ if and only if $u$ is available to execute step $s$.
Such a component may be derived from inspection of user calendars and take into account activities such as holidays and time away from the office on business.
(Of course, defining $\alpha$ will also require some assumptions to be made about when the steps in a workflow instance will be executed, which may not always be possible.
In such cases, we can simply set $\alpha(s,u) = 1$ for all $s$ and $u$.)
 
The probabilistic component is specified as a function $\rho : S \times U \rightarrow [0,1]$, where $\rho(s,u)$ denotes the probability that $u$ is not available to execute step $s$. 
Values of $\rho(s,u)$ could be derived from historical information about user availability, such as (unplanned) absences from work.
Given $W$, $\alpha$, $A$ and $\rho$, we define
 \[
  \omega(s,u) = %
   \begin{cases}
     M & \text{if $A(s,u) \cdot \alpha(s,u)=0$}, \\
     \rho(s,u) & \text{otherwise},
   \end{cases}
 \]
where $M$ is a large positive constant.
Finally, for a plan $\pi$, we define \[ \omega_A(\pi) = \sum_{s \in S} \omega(s, \pi(s)). \]
In other words, assigning an unauthorized or unavailable user results in a weight of at least $M$, while assigning an authorized user incurs a weight corresponding to the likelihood of the user being unavailable. 

Clearly $\omega_A$ respects the definition of an authorization weight function as described in Section~\ref{sec:wwsp}.
In addition to this authorization function, we can define a constraint weight function $\omega_C : \Pi \rightarrow \mathbb{Q}^+$, as we did in Section~\ref{sec:wwsp}.
We may now solve the resulting instance of \BOWSP with one of the three approaches mentioned earlier.
 
We now consider how the result of solving such an instance of \BOWSP may be interpreted and provide useful information to workflow administrators.
For a plan $\pi$, let $X(\pi)$ be a random variable equal to the number of unexecuted steps due to users being unavailable. 
Notice that by definition of expectation, if $\omega_A(\pi)$ is less than $M$, then $\mathbb{E}[X(\pi)]=\omega_A(\pi)$ and thus by Markov's inequality\footnote{Markov's inequality states that if $X$ is a non-negative random variable and $a>0$, then $\mathbb{P}(X\ge a)\le \mathbb{E}[X]/a$~\cite{Weisstein}.} and the fact that $X(\pi)$ is integer-valued, the probability that  all steps will be executed by $\pi$ is 
 \begin{align*} \mathbb{P}(X(\pi) = 0) &= 1-\mathbb{P}(X(\pi)\ge 1) \\ &\ge  1-\omega_A(\pi). \end{align*}

Now consider our purchase order workflow example.
For simplicity, we will assume that $\rho(s,u)$ does not depend on $s$; that is, $\rho(s,u)=\rho(s',u)$ for all $s,s' \in S$ and we may simply write $\rho(u)$.
Illustrative values of $\rho(u)$ are shown in Figure~\ref{subfig:user-probabilities}, and the function $\omega$ derived from those probabilities (using Figures~\ref{subfig:authorization-relation} and~\ref{subfig:user-probabilities}) is shown in Figure~\ref{subfig:omega-for-resiliency-example}. 

\begin{figure}[h]
\begin{subfigure}{\textwidth}\centering
 \begin{tabular}{|*{8}{>{$}r<{$}}|} 
 \hline
  u_1 & u_2 & u_3 & u_4 & u_5 & u_6 & u_7 & u_8 \\
 \hline
  ~0.01 & ~0.06 & ~0.03 & ~0.05 & ~0.07 & ~0.05 & ~0.06 & ~0.01 \\
 \hline
 \end{tabular}
\caption{$\rho : U \rightarrow [0,1]$}\label{subfig:user-probabilities}
\end{subfigure}

\vspace*{.5\baselineskip}

\begin{subfigure}{\textwidth}\centering
  \begin{tabular}{|>{$}r<{$}|*{6}{>{$}r<{$}}|}
  \hline
   & s_1 & s_2 & s_3 & s_4 & s_5 & s_6 \\
  \hline
   u_1 & 0.01 & M & 0.01 & 0.01 & M & M \\
   u_2 & 0.06 & M & 0.06 & 0.06 & M & M \\
   u_3 & 0.03 & M & 0.03 & M & M & M \\
   u_4 & 0.05 & M & 0.05 & M & M & M \\
   u_5 & 0.07 & M & 0.07 & M & M & M \\
   u_6 & M & 0.05 & 0.05 & M & 0.05 & M \\
   u_7 & M & M & 0.06 & 0.06 & 0.06 & M \\
   u_8 & M & M & M & M & 0.01 & 0.01 \\
  \hline
  \end{tabular}
\caption{$\omega : S \times U \rightarrow \mathbb{Q}^+$}\label{subfig:omega-for-resiliency-example}
\end{subfigure}
\caption{Assigning probabilistic weights}\label{fig:assigning-probabilistic-weights}
\end{figure}

For ease of illustration we assume that the penalty of breaking each constraint is the same and that we allow at most one constraint to be violated. 
We also assume that $\alpha(s,u) = 1$ for all $s$ and $u$.
Then it makes sense to seek a Pareto optimal plan (since the above assumptions essentially fix the maximum penalty due to the violation of constraints).
The plan $\tau$ that minimizes $\omega_A$ is given by $\tau(s_1)=\tau(s_3)=\tau(s_4)=u_1$, $\tau(s_2)=u_6$, $\tau(s_5)=\tau(s_6)=u_8$ and the expected value of the number of unexecuted steps $\omega_A(\tau)$ equals $0.10$. 
Thus, the probability that all steps will be executed is at least $0.90$.\footnote{Markov's inequality is known to provide only a rough bound and so the probability that all steps will be executed is likely to be even larger.} 
Now assume that we do not allow any constraint to be violated.
Then $\tau'(s_1)=\tau'(s_3)=u_3$, $\tau'(s_4)=u_1$, $\tau'(s_2)=u_6$, $\tau'(s_5)=\tau'(s_6)=u_8$ is a minimum authorization weight plan and the expected value of the number of unexecuted steps equals $0.14$, implying that the probability that all steps will be executed is at least $0.86$.
Notice that the probability that all steps will be executed is lower when we don't allow any constraints to be violated (as one would expect).
Hence, we believe that the model presented in this section may fit well
to many practical situations, and that the results provided by a solution to
BO-WSP may provide useful information to workflow administrators.

\section{Concluding remarks}\label{sec:conclusion}

In this paper, we have generalized the {\sc Valued WSP} problem~\cite{CrGuKa15}, by defining the {\sc Bi-objective WSP} (\BOWSP).
In doing so, we have introduced a tool for analyzing a variety of problems relating to workflow satisfiability.
Moreover, we have developed fixed-parameter tractable algorithms to solve \BOWSP.
This means that we are able to solve many instances of \BOWSP in which the parameters take values that one would expect to see in practice (namely, the numbers of users is large compared to the number of steps).

Our work improves on existing work in a number of ways and provides a single framework within which a variety of problems may be analyzed and solved.
In particular, we are able to solve {\sc Valued WSP} (and generalizations thereof), the Cardinality-Constrained Minimum User Problem~\cite{Roy2015} (and generalizations thereof), and questions about workflow resiliency.
We have also established that the decision problems associated with resiliency due to Wang and Li are fixed-parameter tractable.

We have taken inspiration from the work of Mace \emph{et al.}~\cite{MaMoMo14} by developing a new model for quantifying workflow resiliency in terms of the expected number of steps that will be executed for a given plan, based on probabilities associated with the (un)availability of users.
We believe our model for user unavailability is more realistic than that of Mace \emph{et al}.
It is certainly the case that our approach leads to more efficient methods of computing the expected number of steps that will be executed.
Nevertheless, Mace \emph{et al.} introduced a number of useful metrics that cannot be computed using \BOWSP.
Unfortunately,  these metrics are very expensive to compute, because it is necessary to consider every possible plan and every possible sequence of usersets that may be available.

There are several opportunities for further research in this area.
For most workflow specifications it is not necessary to consider the control flow (specified in our model by the partial ordering on the set of steps)~\cite{CrGu13}.
However, some workflow specifications may contain sub-workflows that are iterated and problems analogous to WSP in this context are known to be hard~\cite{BaBuKa12}.
We intend to investigate whether our methods can be extended to cater for such workflows.

We also hope to extend our research into quantitative workflow resiliency.
In particular, we would like to develop metrics that provide useful information in practice (like those of Mace \emph{et al.}) but can be computed using \BOWSP (for appropriate choices of weight functions).

\paragraph{Acknowledgements.}
This research was partially supported by an EPSRC grant EP/K005162/1. 
Gutin's research was partially supported by Royal Society Wolfson Research Merit Award.
Karapetyan's research was partially supported by EPSRC grant EP/H000968/1.

\bibliography{refs}
\bibliographystyle{abbrv}

\newpage
\appendix

\section{Proof of Theorem~\ref{thm:tightness}}\label{app:pareto-front-may-have-B-k-points}

\begin{proof}
Let $S$ be the set of $k$ steps, and consider all unordered pairs of steps in an arbitrary order $c_1, \dots, c_{\binom{k}{2}}$. We define a separation-of-duty constraint for every such pair, and define $\omega_{c_i}(\pi) = 2^i$ for all $i \in \{1, \dots, {k \choose 2}\}$ and all plans $\pi$ not satisfying $c_i$. Then, for all $T \subseteq S$, introduce a user $u_T$, and denote by $U$ the set of all such users. To define the authorization weights, we introduce, for all $T \subseteq S$, $cut(T) = \{\{s, s'\}: s \in T, s' \in S \setminus T\}$.
We then define, for all $T, Q \subseteq S$:
\[
 \omega(Q,u_T) = 
  \begin{cases}
   \frac{1}{2}\sum_{c_i \in cut(T)} 2^i  & \text{if $Q=T$,} \\
  	\infty & \text{otherwise}.
  \end{cases}
\]
 And, finally, recall that $\omega_A(\pi) = \sum_{u \in U} \omega(\pi^{-1}(u), u)$ for any plan $\pi : S \rightarrow U$.
It is easy to see that, by construction, all patterns have different constraint weights; and for a fixed pattern $P = \{T_1, \dots, T_q\}$, $q \le k$, there exists a unique complete plan $\pi_P$ with a finite authorization weight (the one that assigns $u_T$ to steps in $T$). Moreover, remark that $\omega_C(\pi_P)$ equals the sum of the weights of all constraints whose both steps are withing the same set, while $\omega_A(\pi_P)$ equals the sum of the weights whose steps lie in different sets. In other words, we have $\omega_A(\pi_P) = M - \omega_C(\pi_P)$, where $M = \sum_{i=1}^{k \choose 2} 2^i$. Hence, the Pareto front consists of one point per pattern, and is thus of size ${\cal B}_k$.
\end{proof}

\section{The branching heuristic}\label{sec:bh}

 Recall that the branching heuristic of PBB is responsible for selecting step $s$ in each node of the search tree; this step is then used to generate children of the current pattern.
 As is typical in tree search methods, the aim is to select a step such that the most constrained parts of the solution are defined as early as possible.
 This tends to dramatically reduce the size of the search tree and hence the running time of the algorithm.
 
 Let $P$ be the current pattern, and $S' \subset S$ be the set of currently assigned steps.
 Let $C(s)$ be the set of constraints that include step $s$ in their scope $T_c$.
 Let $C_{\neq}(s)$, $C_{\le}(s)$ and $C_{\ge}(s)$ be the sets of separation-of-duty, at-most and at-least constraints that include step $s$ in their scopes.
 Then the branching heuristic of PBB selects a step $s \not\in S'$ that maximizes the following function:
\begin{equation}
	\label{eq:rho}
	\rho(P, s) =  \phi(s) + \sum_{c \in C(s)} \rho^c \big( |\set{B \in P :\ B \cap T_c \neq \emptyset} |,\; |S' \cap T_c| \big) \,,
\end{equation} 
where $T_c$ is the scope of constraint $c \in C$, $\rho^c(m, t)$ is a table function and $\phi(s)$ is a precalculated value reflecting the relative importance of a step (in the absence of knowledge about the current pattern):
\begin{multline}
	\label{eq:psi}
	\phi(s) = \sum_{c \in C(s)} \psi_c
	+ 50 \cdot | \set{ c \in C_{\neq}(s), c' \in C_{\le}(s) : T_c \subset T_{c'} } | \\
	+ 50 \cdot | \set{ c \in C_{\le}(s), c' \in C_{\ge}(s) :\ |T_c \cap T_{c'}| \ge 3 } |
	+ | \set{ u \in U :\ \omega(\set{s}, u)) > 0 } |.
\end{multline}
 The value $\psi_c$ depends on the type of constraint $c$; for a separation-of-duty constraint it is 1, and for at-most or at-least constraints it is 5.
 The table function $\rho^c(m, t)$ reflects the degree of constraintness of constraint $c$ if $t$ of its steps are already assigned, and the number of already assigned users is $m$;
 Figure~\ref{fig:constraint-penalties-appendix} provides illustrative values of $\rho^c(m, t)$ for at-most-3 out of 5 and at-least-3 out of 5 constraints respectively, where $m$ is the number of distinct users assigned to the scope of constraint $c$, and $t$ is the number of already assigned steps in the scope of $c$.
 	$\rho^c(m, t) = 0$ for all the separation-of-duty constraints.

\begin{figure}[htb]
 \centering
{
	\begin{tabular}{@{} crrrr @{}}
		\toprule
 		$m~\backslash~t$ & 1 & 2 & 3 & 4 \\
		\midrule
		1	&	0	&	0	&	0	&	0 \\
		2	&	--	&	0	&	20	&	0 \\
		3	&	--	&	--	&	500	&	500 \\
		4	&	--	&	--	&	--	&	200 \\
		\bottomrule
	\end{tabular}
	 }
 \qquad
{
	\begin{tabular}{@{} crrrr @{}}
		\toprule
 		$m~\backslash~t$ & 1 & 2 & 3 & 4 \\
		\midrule
		1	&	0	&	0	&	10	&	5 \\
		2	&	--	&	0	&	0	&	20 \\
		3	&	--	&	--	&	0	&	0 \\
		4	&	--	&	--	&	--	&	0 \\
		\bottomrule
	\end{tabular}
 }

 \caption{Examples of constraint penalties for at-most-3 and at-least-3 constraints}\label{fig:constraint-penalties-appendix}
\end{figure}

 While it is difficult to justify every term in $\rho(\mathcal{P}, s)$ and therein referenced functions, the idea is to account for each easy-to-compute interaction between step $s$ and constraints and authorizations, while keeping as many of the terms as possible in $\phi(s)$ because $\phi(s)$ can be precomputed before the tree search.
 Each of the terms is provided with a coefficient which is then obtained by automated parameter tuning, an approach in which we formulate an optimization problem of finding the optimal parameter values and then solve it heuristically.
 
\section{MIP encodings}\label{sec:mip-encodings}

Here we describe how the constraints and users present in our test instances can be encoded for the MIP-based algorithm.
 The encoding for a staff member $u$ is straightforward:
 $$
 \omega_u(x) = \sigma_u \sum_{s \in B_u} x_{su} + M \sum_{s \in T\setminus (A_u\cup B_u)} x_{su}\,.
 $$
 The encoding for a consultant $u$ requires additional binary decision
variables $p_u$ and $q_u$, where $p_u$ is 1 if and only if at least one step
is assigned to $u$, and $q_u$ is 1 if and only if at least one step from
outside $A_u$ is assigned to $u$:
$$
p_u \ge x_{su} \qquad \forall s \in S
\qquad
\text{and}
\qquad
q_u\ge x_{su} \qquad \forall s \notin A_u \,.
$$
 Then $\omega_u(x)$ for a consultant is defined as
$$
\omega_u(x) = \sigma_u p_u + (M - \sigma_u) q_u \,.
$$

 For a separation-of-duty constraint $c$ with scope $\set{s, t}$ we also need an additional binary decision variable $p$:
 $$
 p_{c} \ge x_{su} + x_{tu} - 1 \qquad \forall u \in U \,,
 $$
 which is then used in the objective function as follows:
 $$
 \omega_c(x) = Mp_c \,.
 $$
 
 Let $c$ be an at-least-3 constraint with scope $T$.
 We introduce binary variables $z_u$, $u \in U$, such that $z_{uc} = 1$ if and only if user $u$ is assigned to at least one step in $T$:
 $$
 z_{uc} \le \sum_{s \in T} x_{su} \qquad \forall u \in U \,.
 $$
 We also need binary variables $p^i_c$ for $i = 1, 2$ such that $p^i_c = 1$ if and only if there are no more than $i$ users assigned to all the steps in $T$:
 $$
 \sum_{u \in U} z_{uc} + \sum_{i = 1}^2 p^i_c \ge 3 
 \qquad 
 \text{and} 
 \qquad
 p^1_c \le p^2_c \,.
 $$
 Then 
 $$
 \omega_c(x) = (\sigma^1 - \sigma^2) \cdot p^1_c + \sigma^2 \cdot p^2_c \,,
 $$
 where $\sigma^i$ is the penalty for assigning exactly $i$ users to the scope $T$.

 Encoding of an at-most-3 constraint $c$ with scope $T$ also requires variables $z_u$, but since we are now maximizing them, they need to be constrained from above:
 $$
 z_{uc} \le x_{su} \qquad \forall s \in T,\ \forall u \in U \,.
 $$
 We again use binary variables $p^i_c$ but for $i = 4, 5$:
 $$
 \sum_{u \in U} z_{uc} - \sum_{i = 4}^5 p^i_c \le 3 
 \qquad 
 \text{and} 
 \qquad
 p^5_c \le p^4_c \,.
 $$
 Then 
 $$
 \omega_c(x) = (\sigma^5 - \sigma^4) \cdot p^5_c + \sigma^4 \cdot p^4_c \,,
 $$
 where $\sigma^i$ is the penalty for assigning exactly $i$ users to the scope $T$.

Notice that some of the auxiliary variables introduced above can actually be made continuous without loss of correctness. Normally this will allow MIP solver to perform more efficiently.

\end{document}